\newtheorem{theorem}{Theorem}
\newtheorem{proposition}{Proposition}
\title{Stationary distribution of the volume at the best quote \\ in a Poisson order book model}
\author{Ioane Muni Toke
\\ ERIM, University of New Caledonia
\\ BP R4 98851 Noumea CEDEX, New Caledonia
\\ \tt{ioane.muni-toke@univ-nc.nc}
\\ \vspace{6pt}
\\ Applied Maths Laboratory, Chair of Quantitative Finance
\\ Ecole Centrale Paris 
\\ Grande Voie des Vignes, 92290 Châtenay-Malabry, France
\\ \tt{ioane.muni-toke@ecp.fr}
}
\date{\today}
\begin{document}
\maketitle
\begin{abstract}
In this paper, we develop a Markovian model that deals with the volume offered at the best quote of an electronic order book. The volume of the first limit is a stochastic process whose paths are periodically interrupted and reset to a new value, either by a new limit order submitted inside the spread or by a market order that removes the first limit. Using applied probability results on killing and resurrecting Markov processes, we derive the stationary distribution of the volume offered at the best quote. All proposed models are empirically fitted and compared, stressing the importance of the proposed mechanisms.
\end{abstract}

\section{Introduction}

The limit order book has recently been the subject of a growing interest among academics and practitioners studying financial markets.
This electronic structure centralizes all the orders submitted to a given market by all participants, finds matchable buy and sell orders, and therefore defines the price of the financial product exchanged. The fundamental question is thus to understand how the sequences of submitted orders -- the orders flows -- are translated into price dynamics.
\citet{Biais1995} and \citet{Bouchaud2002} are pioneer investigations on the empirical properties of the limit order book.
\citet{Smith2003} is a pioneer theoretical framework for the study of the continuous double auction that is used in limit order books.

The order book is a complex system. Basic mathematical models, such as \citet{Cont2010}, rely on simplifying assumptions: only three main types of orders are submitted (limit, market and cancellations, ignoring exchange-specific rules and specificities) ; all orders flows are Poisson processes ; all orders have the same unit size.
\citet{Abergel2013} shows that under appropriate assumptions, some limit theorems apply and such Markovian models lead to a diffusion equation for the price. \citet{MuniToke2014} shows that with similar Poisson models, the average shape of the order book is analytically computable, even when relaxing the unit-volume hypothesis.

Among the important quantities that describe a limit order book, the volume offered at the best quote (bid or ask), i.e. the total number of shares available at the first limit of the order book, is fundamental. 
One reason is that this quantity is often the only information easily available to market participants (known as Level-1 data).
Another (linked) reason is that this quantity is heavily used in trading strategies by market participants : \citet{Farmer2004} shows that the majority of market orders that move the price have a size exactly equal to the volume of the best quote at the time of submission.

However, the volume of the best quote remains difficult to grasp in a model, since market events make it vary widely. When the volume of the best quote drops to zero (because of large market orders or cancellations that remove all the liquidity of the best quote), or when a new best quote is submitted inside the spread, the price changes, and the volume of the best quote is reset to a new quantity that may have no link to the quantities describing the last order submitted. 
Therefore, computing the volume of the best quote after a new event requires to keep track of the whole order book.

Recently, \citet{Cont2012} proposes a model in which the order book is restricted to its first limits. When the price does not move, the volume at the best quote obviously varies according to the arriving orders flows, and when this volume drops to zero, the price moves and the volume at the best quote is immediately reset to some random value. Therefore, the volume of the best quotes (bid and ask) is a two-dimensional process with values on the positive orthant that jumps randomly inside the orthant each time it reaches an axis. \citet{Cont2012} shows that under appropriate assumptions and using limit theorems in the spirit of queueing theory, this volume may approximately exhibit a jump-diffusive behaviour.
There is however a very restrictive assumption for this model to be valid: the spread has to always be equal to one tick. Indeed, one cannot allow for limit orders submitted inside the spread in this framework, since it would make the process jumps even when it does not reach an axis. Such an assumption may be an appropriate model for some very busy periods of trading of very liquid stocks, but probably not in the general case.

In this paper, we show that in the classical zero-intelligence order book model with Poisson processes, the stationary distribution of the volume offered at the best quote can be analytically and then numerically computed. The proposed model is basic but flexible. It does not assume that the spread is always equal to one tick, i.e. all types of events that make the volume of the best quote jump are taken into account: aggressive market orders that matches the full first limit, as well as aggressive limit orders submitted inside the spread as well.
The main idea is that such jumps of the volume at the best quote can be identified as killing and resurrecting a Markov process \citep{Pakes1997}.
The assumptions that all orders have to be unit-sized can even be lifted, with additional restrictions on the volume of market orders.

The remainder of the paper is organized as follows. 
Section \ref{sec:GeneralModel} describes precisely the general zero-intelligence model of the order book with Poisson processes that is used here. The main result on killing and resurrecting Markov processes is recalled and adapted to the order book context.
Sections \ref{sec:ModelsWithAggressiveMarketOrdersExclusively} and \ref{sec:ModelsWithBothMarketOrders} then explore two types of restrictive assumptions that allow the analytical computation of the stationary distribution of the volume at the best quote: Section \ref{sec:ModelsWithAggressiveMarketOrdersExclusively} excludes market orders that partially match the best quote, while Section \ref{sec:ModelsWithBothMarketOrders} allows for all types of market orders but with some size restrictions.
Finally, Section \ref{sec:EmpiricalResults} provides empirical fittings of all the analytical models proposed, with a comparison to the simulated "best effort" of the unrestricted Poisson model.

\section{A Markovian model of the one-side order book}
\label{sec:GeneralModel}

Let us consider the best ask quote of an electronic order book (the model for the best bid is strictly identical). The quantity offered at the best quote may be modified by the arrival of market or limit orders and by cancellation of existing limit orders.

Firstly, we assume that limit orders at the best quote are submitted according to a Poisson process with rate $\lambda_1$ and that the size of limit orders at the best quote form a set of independent and identically distributed random variables with probability distribution $(g_{1,n})_{n\in\mathbb N}$.
Secondly, we also assume that each unit-size component of a limit order (i.e. each share) standing at the best quote is cancelled some random time after its submission. All these random times are assumed to form a set of independent and identically distributed random variables with exponential distribution with parameter $\theta_1>0$.
Thirdly, we assume that we observe "partial" market orders (the meaning of partial will soon be clear), that are submitted according to a Poisson process with rate $\mu$ and are all unit-sized (one share).
We also assume that the cancellation mechanism and the "partial" market orders cannot fully deplete the best quote. In other words, the last share of the best quote cannot be cancelled, and cannot be matched by a partial market order (hence the name of the "partial" market orders, which match only partially the best quote).
This small assumption ensures that classic cancellations and market orders cannot make the price move.

Two types of orders can make the price move: the aggressive market orders and aggressive limit orders. Orders of the first type, aggressive market orders, are submitted according to a Poisson process $\mu_A$, and their size is equal to the available quantity at the best quote at the time of the submission.
In other words, aggressive market orders match all the liquidity available at the best quote. This is not an unreasonable assumption: \cite{Farmer2004} shows that on a 16-stock sample from the London stock exchange, 86\% of the buy market orders that change the price have a size that is exactly equal to the volume offered at the best ask.
When an aggressive market order is submitted, the volume available at the best quote drops to zero, the price moves up and the quantity available at the best limit is instantaneously reset to the quantity available at the next non-empty limit of the order book. 
This next non-empty limit of the order book evolves according to the submission and cancellation of limit orders. We may already specify, for future use, a model that is coherent with the Markovian setting we are establishing. To this end we will assume the following: the limit orders arrive according to a Poisson process with rate $\lambda_2$ ; the sizes of these limit orders form a set of independent and identically distributed random variables with probability distribution $(g_{2,n})_{n\in\mathbb N}$ ; each unit-size component of a limit order (i.e. each share) standing at the next non-empty limit is cancelled some random time after its submission, and all these random times are assumed to form a set of independent and identically distributed random variables with exponential distribution with parameter $\theta_2>0$.

The second type of orders that move the price are the aggressive limit orders. These are limit orders that are submitted inside the spread, i.e. at a price lower than the current best ask.
We will assume that aggressive limit orders are submitted according to a Poisson process with rate $\lambda_0$, and that the size of all aggressive limit orders form a set of independent and identically distributed random variables with probability distribution $(g_{0,n})_{n\in\mathbb N}$.
The effect of the submission of an aggressive limit order is simple: at the moment of the submission, this order instantly becomes the best quote, i.e. the ask price is reset to the price of the aggressive limit order, and the quantity available at the best quote is reset to the volume of the submitted limit order.

In summary, the continuous-time stochastic process $X=\left\{X(t),t\in[0,\infty)\right\}$ describing the volume available at the best quote evolves as follows.
Let $\tau_1$ be the random time of the first price move. During the time interval $[0,\tau_1)$, in the absence of events that move the price (aggressive market orders, aggressive limit orders), $X$ evolves as the stochastic process $1+Y=\{1+Y(t),t\in[0,\infty)\}$, which is one (the last share that cannot be cancelled or executed) plus the size of a queue with the infinitesimal generator:
\begin{equation}
\label{eq:InfGenFL}
\left(\begin{array}{cccccc}
-\lambda_1 & \lambda_1 g_{1,1} & \lambda_1 g_{1,2} & \lambda_1 g_{1,3} & \lambda_1 g_{1,4} &\ldots
\\
\mu+\theta_1 & -(\mu+\lambda_1+\theta_1) & \lambda_1 g_{1,1} & \lambda_1 g_{1,2} & \lambda_1 g_{1,3} &\ldots
\\
0 & \mu+2\theta_1 & -(\mu+\lambda_1+2\theta_1) & \lambda_1 g_{1,1} & \lambda_1 g_{1,2} &\ldots
\\
0 & 0 & \mu+3\theta_1 & -(\mu+\lambda_1+3\theta_1) & \lambda_1 g_{1,1} &\ldots
\\
\vdots & \vdots & \ddots & \ddots & \ddots & \ddots 
\end{array}\right).
\end{equation}

Now, at time $\tau_1$, the price moves because of an aggressive limit or aggressive market order. The process $X$ is instantaneously reset to a new random variable $X_{\tau_1}=H$ that, depending on the direction of the price move, represents either the size of the incoming aggressive limit order (downward price move), or the volume offered at the next non-empty limit inside the order book (upward price move).
Then, if $\tau_2$ is the random time of the next price movement, $X$ on $[\tau_1,\tau_2)$ behaves according to the infinitesimal generator \eqref{eq:InfGenFL}, and so on.

This mechanism is identifiable to what is known in applied probability as killing and resurrecting a Markov process. 
The process of the volume of the best quote starts at time $0$ and evolves according to the infinitesimal generator \eqref{eq:InfGenFL}. Then, upon the submission of an aggressive limit or market order, it is killed, and (instantaneously) resurrected to a random variable $H$ with distribution $(h_i)_{i\in\mathbb N^*}$, from where it restarts its course according the previous dynamics. Such a mechanism is studied in \cite{Pakes1997}, where the following result is proved.

\begin{theorem}[rephrased from \cite{Pakes1997}]
Let $Z=\{Z_t,t\in[0,\infty)\}$ be a Markov process on $\mathbb N$ with initial distribution $(h_i)_{i\in\mathbb N}$. Zero state is assumed to be absorbing for $Z$. Let $R=\{R_t,t\in[0,\infty)\}$ be the process constructed as follows : $R$ starts following some path of the process $Z$ ; at some random time (exponentially distributed with parameter $\beta>0$), $R$ is killed, i.e. reset to $0$ ; after some random time (exponentially distributed with parameter $\alpha>0$), $R$ is resurrected, i.e. restarts following a new path of the process $Z$ ; and so on.

Then the process $R$ admits a stationary distribution $(\pi_j)_{j\in\mathbb N)}$ given by:
\begin{equation}
	\pi_0 = \frac{\beta}{\alpha+\beta-\alpha\beta \hat f_0(\beta)}, \;\;\; \pi_j = \alpha\pi_0 \hat f_j(\beta),
\end{equation}
where $\hat f_j$ is the Laplace transform of the series $\sum_{i\in\mathbb N} h_i p_{i,j}(t)$, in which $p_{i,j}(t)=\mathbf P(Z(t)=j\vert Z(0)=i)$ is the transient probability of the (non-killed) process $Z$ from state $i$ to state $j$.
\end{theorem}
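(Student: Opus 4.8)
The plan is to treat $R$ as a regenerative (renewal--reward) process and to read off the stationary law as the ratio of the expected occupation time of a state to the expected length of a cycle. First I would recast the verbal description of $R$ as a genuine continuous-time Markov chain on $\mathbb N$: in every state $i\ge 1$ it runs with the generator of $Z$ and carries an additional killing rate $\beta$ to state $0$, while state $0$ is left only by resurrection at rate $\alpha$ to a fresh state drawn from $(h_i)$. The key structural remark is that $0$ is absorbing for $Z$ and that killing also maps into $0$, so once $R$ reaches $0$ --- whether by the exponential killing clock or by $Z$ being absorbed on its own --- it simply waits an $\mathrm{Exp}(\alpha)$ time and then restarts an independent copy of $Z$ from $(h_i)$. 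The resurrection instants are therefore i.i.d. regeneration epochs, and a cycle consists of one excursion in $\mathbb N^*$ followed by one sojourn at $0$.

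Next I would compute the two ingredients of the renewal--reward formula $\pi_j = \mathbf E[\text{occupation of }j\text{ per cycle}]/\mathbf E[\text{cycle length}]$. For the excursion, I would couple the dynamics of $Z$ (started from $(h_i)$) with an independent $\mathrm{Exp}(\beta)$ killing alarm, so that the excursion ends at $\min(T_0,\text{alarm})$, where $T_0$ is the absorption time of $Z$. For $j\ge1$ the integrand $\mathbf 1\{Z_t=j\}$ forces $t<T_0$, hence by Fubini the expected time at $j$ equals $\int_0^\infty e^{-\beta t}\,\mathbf P(Z_t=j)\,dt=\int_0^\infty e^{-\beta t}\sum_i h_i p_{i,j}(t)\,dt=\hat f_j(\beta)$. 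Using $\{T_0>t\}=\{Z_t\ne0\}$, the expected excursion length is $\int_0^\infty e^{-\beta t}\,\mathbf P(Z_t\ne 0)\,dt = \tfrac1\beta-\hat f_0(\beta)$, while the sojourn at $0$ contributes expected time $\tfrac1\alpha$ to state $0$ and nothing to the others, so $\mathbf E[\text{cycle}] = \tfrac1\beta-\hat f_0(\beta)+\tfrac1\alpha$.

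Dividing then gives $\pi_j = \hat f_j(\beta)\big/\big(\tfrac1\beta-\hat f_0(\beta)+\tfrac1\alpha\big)$ for $j\ge1$ and $\pi_0 = \tfrac1\alpha\big/\big(\tfrac1\beta-\hat f_0(\beta)+\tfrac1\alpha\big)$; clearing denominators by $\alpha\beta$ yields exactly $\pi_0=\beta/(\alpha+\beta-\alpha\beta\hat f_0(\beta))$ and $\pi_j=\alpha\pi_0\hat f_j(\beta)$. I would close by checking consistency: since $\sum_{j\ge0}\mathbf P(Z_t=j)=1$ one has $\sum_{j\ge0}\hat f_j(\beta)=\tfrac1\beta$, which makes the $\pi_j$ sum to one, and I would invoke the ergodic theorem for regenerative processes --- the mean cycle length $\tfrac1\beta-\hat f_0(\beta)+\tfrac1\alpha$ being finite and strictly positive --- to guarantee that this ratio is the unique stationary distribution rather than merely a candidate.

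The step I expect to be the main obstacle is the correct accounting at state $0$: one must recognise that $R$ occupies $0$ only while awaiting resurrection, so that natural absorption of $Z$ does not add extra time at $0$ but instead shortens the live excursion. This is precisely what produces the $-\alpha\beta\hat f_0(\beta)$ correction in the denominator; a naive reading in which the killing clock always runs for a full $\mathrm{Exp}(\beta)$ time and the $\mathrm{Exp}(\alpha)$ wait is merely appended afterwards would instead give the incorrect denominator $\alpha+\beta$. Establishing the occupation-time/Laplace-transform identity through the coupling and Fubini, and verifying that the excursion terminates almost surely so that the regenerative machinery applies, are the technical points that require the most care.
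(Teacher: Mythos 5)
Your proof is correct, and there is in fact nothing in the paper to compare it against: the theorem is stated as ``rephrased from Pakes (1997)'' and the paper defers the proof entirely to that reference. Your regenerative/renewal--reward derivation is a clean, self-contained route to the result, very much in the spirit of the cited source. The computations check out: for $j\geq 1$ the occupation identity $\mathbf E\int_0^{\min(T_0,E_\beta)}\mathbf 1\{Z_t=j\}\,dt=\hat f_j(\beta)$ is valid because $\{Z_t=j\}\subseteq\{t<T_0\}$ (state $0$ being absorbing) and the alarm $E_\beta$ is independent; the mean live period is $\frac1\beta-\hat f_0(\beta)$, the cycle mean is $\frac1\alpha+\frac1\beta-\hat f_0(\beta)$, and clearing by $\alpha\beta$ gives exactly $\pi_0=\beta/(\alpha+\beta-\alpha\beta\hat f_0(\beta))$ and $\pi_j=\alpha\pi_0\hat f_j(\beta)$. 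You also correctly isolated the one genuinely delicate point, the accounting at state $0$: the theorem's verbal description is ambiguous about what happens if $Z$ is absorbed naturally before the $\mathrm{Exp}(\beta)$ alarm rings, and your reading (the resurrection clock starts as soon as the process sits at $0$, however it got there) is the one that reproduces the stated formula --- a quick sanity check with a two-state chain absorbing at rate $\gamma$ gives $\pi_1=\alpha/(\alpha+\beta+\gamma)$ under your reading, consistent with the theorem, while the naive ``full $\mathrm{Exp}(\beta)$ clock plus appended $\mathrm{Exp}(\alpha)$ wait'' reading yields the wrong denominator $\alpha+\beta$, exactly as you say. (In the paper's application the subtlety is moot anyway, since state $0$ is unreachable without killing, so $\hat f_0=0$.) Two minor polish points: to pass from the renewal--reward ratio to convergence toward a stationary law you should note the cycle length is non-lattice, which is immediate here since each cycle contains an independent $\mathrm{Exp}(\alpha)$ summand; and ``unique stationary distribution'' deserves the standard irreducibility caveat --- states unreachable from the support of $(h_i)$ are transient and receive $\pi_j=\hat f_j(\beta)=0$, which is consistent with the formula but means uniqueness holds on the recurrent class rather than unconditionally.
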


In our order book model, the resurrection is instantaneous, i.e. $\alpha\to +\infty$ ; the killing events are aggressive limit and market orders, i.e. $\beta = \lambda_0+\mu_A$ by standard properties of the Poisson processes ; the state $0$ is not accessible without killing (classic cancellations and partial market orders cannot deplete the best limit), i.e. $\hat f_0=0$. 
Another consequence of the last fact is that the distribution $(h_i)_{i\in\mathbb N^*}$ of $H$ represents exactly the new volume available at the best quote after an aggressive event. This distribution is therefore a mix of the distribution $(g_{0,i})_{i\in\mathbb N^*}$ with probability $\frac{\lambda_0}{\mu_A+\lambda_0}$ and the stationary distribution of the volume at the second limit, denoted $(\pi_{2,i})_{i\in\mathbb N}$, with probability $\frac{\mu_A}{\mu_A+\lambda_0}$.
We thus obtain the following result.

\begin{proposition}
In the general Poisson order book model described in this section, the stationary distribution $(\pi_j)_{j\in\mathbb N^*}$ of the total volume offered at the best quote is written for any $j\geq 1$ :
\begin{equation}
	\label{eq:StatDistVolBestQuote}
	\pi_j = (\lambda_0+\mu_A) \hat f_j(\lambda_0+\mu_A)
\end{equation}
where $\hat f_j(\cdot)$ is the Laplace transform of the series $\sum_{i\in\mathbb N} h_i p_{i,j}(t)$
in which $\forall i\in\mathbb N^*$, 
\begin{equation}
	h_i = \frac{\lambda_0 g_{0,i}}{\lambda_0+\mu_A} + \frac{\mu_A\pi_{2,i}}{\lambda_0+\mu_A},
\end{equation}
and $p_{i,j}(t)=\mathbf P(1+Y(t)=j\vert 1+Y(0)=i)$ is the transient probability of the (non-killed) process $1+Y$ from state $i$ to state $j$.
\end{proposition}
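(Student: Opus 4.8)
The plan is to obtain the proposition as a direct specialization of the theorem rephrased from \cite{Pakes1997}, by matching the order-book dynamics to the killing-and-resurrecting framework and then passing to the limit of instantaneous resurrection. First I would take the underlying (non-killed) process to be $Z=1+Y$, whose transient probabilities $p_{i,j}(t)=\mathbf P(1+Y(t)=j\mid 1+Y(0)=i)$ are governed by the generator \eqref{eq:InfGenFL}; between two consecutive price moves the volume at the best quote evolves exactly as $1+Y$, so this identification is immediate, and since $1+Y\geq 1$ the state $0$ is trivially absorbing for $Z$ (it is simply never visited), as required by the theorem. The killing events are the aggressive orders, which arrive as the superposition of two independent Poisson processes of rates $\lambda_0$ and $\mu_A$; by the standard superposition property this is a Poisson process of rate $\beta=\lambda_0+\mu_A$, so the inter-kill times are exponential with parameter $\beta$. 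Resurrection in the order book is instantaneous, which corresponds to letting $\alpha\to+\infty$ in the theorem.

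Second, I would pin down the two ingredients the theorem's formula needs, namely $\hat f_0(\beta)$ and the resurrection law $(h_i)$. Because $1+Y\geq 1$ for all time, the non-killed process never reaches $0$, i.e. $p_{i,0}(t)=0$ for every $i\geq 1$ and every $t$; hence $\sum_{i} h_i p_{i,0}(t)\equiv 0$ and its Laplace transform $\hat f_0$ vanishes identically. For the resurrection distribution, I would use the competing-Poisson property: conditionally on a kill, the event is an aggressive limit order with probability $\tfrac{\lambda_0}{\lambda_0+\mu_A}$ and an aggressive market order with probability $\tfrac{\mu_A}{\lambda_0+\mu_A}$. In the first case the new best-quote volume is the incoming order size, distributed as $(g_{0,i})$; in the second case it is the volume at the next non-empty limit, which is sampled in its stationary regime $(\pi_{2,i})$. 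Mixing gives $h_i=\tfrac{\lambda_0 g_{0,i}}{\lambda_0+\mu_A}+\tfrac{\mu_A \pi_{2,i}}{\lambda_0+\mu_A}$, supported on $\mathbb N^*$.

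Third, with $\hat f_0=0$ the theorem yields $\pi_0=\beta/(\alpha+\beta)$ and $\pi_j=\alpha\,\pi_0\,\hat f_j(\beta)=\tfrac{\alpha\beta}{\alpha+\beta}\hat f_j(\beta)$ for $j\geq 1$. Letting $\alpha\to+\infty$ sends $\pi_0\to 0$ (the process spends no time in the killed state) and $\tfrac{\alpha\beta}{\alpha+\beta}\to\beta$, so that $\pi_j\to\beta\,\hat f_j(\beta)=(\lambda_0+\mu_A)\,\hat f_j(\lambda_0+\mu_A)$, which is exactly \eqref{eq:StatDistVolBestQuote}.

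Finally, the step I expect to be the main obstacle is the justification that the volume inherited after an aggressive market order is distributed as the stationary law $(\pi_{2,i})$ of the second limit. This needs both that the second-limit queue be modelled as an autonomous Markov process in the parameters $(\lambda_2,\theta_2,g_{2,\cdot})$ admitting a stationary distribution, and that the arrival epochs of the aggressive market orders sample it in stationarity, which is a Poisson-arrivals-see-time-averages argument relying on the independence of that Poisson process from the second-limit dynamics. One should also verify there is no circularity: $(h_i)$ depends on $(\pi_{2,i})$ rather than on the first-limit law $(\pi_j)$ being computed, so \eqref{eq:StatDistVolBestQuote} is well posed once $(\pi_{2,i})$ is specified.
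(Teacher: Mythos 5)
Your proposal is correct and follows essentially the same route as the paper, which likewise specializes the rephrased theorem of \cite{Pakes1997} with $Z=1+Y$, killing rate $\beta=\lambda_0+\mu_A$ from Poisson superposition, $\hat f_0=0$ because state $0$ is inaccessible without killing, the mixture $h_i=\frac{\lambda_0 g_{0,i}}{\lambda_0+\mu_A}+\frac{\mu_A\pi_{2,i}}{\lambda_0+\mu_A}$, and the instantaneous-resurrection limit $\alpha\to+\infty$ giving $\pi_j=\beta\hat f_j(\beta)$. Your closing remarks on the PASTA justification for sampling $(\pi_{2,i})$ in stationarity and on the absence of circularity are careful additions that the paper leaves implicit, but they do not alter the argument.
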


This result is central in this work.
In the following, we will study several specifications of the above general model, all of them allowing analytical tractability at some cost.
The first criterion dividing the different types of models is the presence or absence of "partial" market models. In the first type of model (Type-1 models), we assume that all market orders are aggressive market orders. In other words $\mu=0$, and there are no "partial" market orders that match only partially the best limit.
In the second type of models (Type-2 Models), this restriction is lifted, i.e. market orders may or may not be aggressive ($\mu>0$), but some restrictions on the distributions of the volumes will be added.
These different types of models are studied in the next two sections.

\section{Models with aggressive market orders exclusively}
\label{sec:ModelsWithAggressiveMarketOrdersExclusively}

Type-1 models assume that all market orders are aggressive, i.e. that $\mu=0$.
In this setting, we are able to analytically compute the stationary distribution $\pi$ of the process $X$ through a direct approach: we compute the transient probabilities $(p_{i,j}(t))_{i,j\in\mathbb N}$ by a standard generating function method, as well as the stationary distribution $(\pi_{2,i})_{i\in\mathbb N^*}$ by a direct method.

Let us start with the transient probabilities $(p_{i,j}(t))_{i,j\in\mathbb N}$. Let $p_{i,j}=r_{i-1,j-1}$ for any $(i,j)\in (\mathbb N^*)^2$. $(r_{i,j}(t))_{i,j\in\mathbb N}$ are the transition probabilities of the process $Y$.
The Kolmogorov forward equations are written for any $(i,j)\in\mathbb N^2$:
\begin{equation}
	r'_{ij}(t) = -(\lambda_1+j\theta_1) r_{ij}(t) + (j+1)\theta_1 r_{i,j+1}(t)+ \sum_{k=0}^{j-1} \lambda_1 g_{1,j-k}r_{i,k}(t).
\end{equation}
Let $G_1(z)=\sum_{j=1}^{+\infty} g_{1,j} z^j$ be the generating function of the distribution of the sizes of incoming limit orders at the best limit.
By multiplicating this equation by $z^j$ and summing over $j$, we obtain that the generating function $\varphi_i(z,t)=\sum_{j=0}^\infty r_{ij}(t)z^j$ is solution of the partial differential equation:
\begin{equation}
\label{eq:PDEvarphi}
0=\frac{\partial\varphi_i}{\partial t}(z,t)-\theta_1(1-z)\frac{\partial\varphi_i}{\partial z}(z,t)+\lambda_1(1-G_1(z))\varphi_i(z,t),
\end{equation}
subject to the initial condition $\varphi_i(z,0)=z^i$. 

As for the stationary distribution $(\pi_{2,i})_{i\in\mathbb N^*}$, we use the Markovian setting described in section \ref{sec:GeneralModel} : $\lambda_2>0$ is the rate of arrival of limit orders, $(g_{2,n})_{n\in\mathbb N}$ is the distribution of their sizes, and $1/\theta_2$ is the average lifetime of a share standing inside the book.
Similarly to what has been assumed for the best quote, we assume that the last share cannot be cancelled, so that the size of the queue does not drop to zero (it is by definition the next-non empty limit of the order book).
With these assumptions, the size of the book at the second limit is the process $\{1+Y_2(t),t\in[0,\infty)\}$  with infinitesimal generator
\begin{equation}
\label{eq:InfGenSL}
\left(\begin{array}{cccccc}
-\lambda_2 & \lambda_2 g_{2,1} & \lambda_2 g_{2,2} & \lambda_2 g_{2,3} & \lambda_2 g_{2,4} &\ldots
\\
\theta_2 & -(\lambda_2+\theta_2) & \lambda_2 g_{2,1} & \lambda_2 g_{2,2} & \lambda_2 g_{2,3} &\ldots
\\
0 & 2\theta_2 & -(\lambda_2+2\theta_2) & \lambda_2 g_{2,1} & \lambda_2 g_{2,2} &\ldots
\\
0 & 0 & 3\theta_2 & -(\lambda_2+3\theta_2) & \lambda_2 g_{2,1} &\ldots
\\
\vdots & \vdots & \ddots & \ddots & \ddots & \ddots 
\end{array}\right).
\end{equation}
The process $Y_2$ admits a stationary distribution $(\rho_{2,i})_{i\in\mathbb N}$, and obviously $\pi_{2,i}=\rho_{2,i-1}$. Writing the classical balance equations and solving the derived ODE for the generating function $\psi(z)=\sum_{n=0}^{+\infty} \rho_{2,n}z^n$, we obtain :
\begin{equation}
	\label{eq:GenFunVolumeIB}
	\psi(z) = \rho_{2,0} e^{\frac{\lambda}{\theta}\int_0^z \frac{1-G_2(u)}{1-u}\,du}
\end{equation}
where $G_2(u) = \sum_{n=0}^{\infty} g_{2,i} u^i$ is the generating function of the distribution of the sizes of limit orders submitted inside the book.

Therefore, if we specify the distributions $g_1$ and $g_2$ of the sizes of incoming limit orders respectively at the best quote and inside the book, and if subsequent computations are analytically tractable, then we can analytically derive the distribution $\pi$. We study two variants of the model 1 setting:
\begin{itemize}
	\item \textbf{Model 1a}: Model 1a assumes that all limit orders submitted at the best quote or inside the book are unit-sized, i.e. $g_{1,1}=g_{2,1}=1$ and $g_{1,n}=g_{2,n}=0$ for any $n\geq 2$.
	This assumptions is the one usually made in zero-intelligence models that look for some analytical tractability \citep[see e.g.][]{Cont2010}.
	\item \textbf{Model 1b}: Model 1b assumes that all limit orders submitted at the best quote or inside the book are geometrically distributed with parameters $0<q_1<1$ and $0<q_2<1$ respectively.
	This assumption has been used in \cite{MuniToke2014} in which it has been used to compute a general average shape of an order book.
\end{itemize}
The results in these two cases are now stated.

\begin{proposition}[Model 1a]
If all limit orders at the best quote or inside the book are unit-sized, then the stationary distribution $(\pi_j)_{j\in\mathbb N^*}$ of the volume offered at the best quote is:
\begin{align}
\pi_j = (\lambda_0+\mu_A) \sum_{i=1}^\infty h_i \sum_{k=0}^{\min(i-1,j-1)} & \binom{i-1}{k} \frac{1}{(j-1-k)!} \left(\frac{\lambda_1}{\theta_1}\right)^{j-1-k} \nonumber
\\ & \times \int_0^\infty e^{-(\lambda_0+\mu)t} e^{-k\theta_1 t} (1-e^{-\theta_1 t})^{i+j-2-2k} e^{-\frac{\lambda_1}{\theta_1}(1-e^{-\theta_1 t})} \,dt,
\label{eq:pij1a}
\end{align}
and
\begin{equation}
h_i = \frac{\lambda_0 g_{0,i}}{\mu_A+\lambda_0} + \frac{\mu_A}{\mu_A+\lambda_0} e^{-\frac{\lambda_2}{\theta_2}}\frac{(\lambda_2)^{i-1}}{(\theta_2)^{i-1} (i-1)!}.
\label{eq:hiUnit}
\end{equation}
\end{proposition}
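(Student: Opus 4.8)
The plan is to specialize the general result $\pi_j=(\lambda_0+\mu_A)\hat f_j(\lambda_0+\mu_A)$ of \eqref{eq:StatDistVolBestQuote} to the unit-size case $g_{1,1}=g_{2,1}=1$. Two ingredients are required: a closed form for the transient probabilities $p_{i,j}(t)$ of the non-killed process $1+Y$, and the stationary law $(\pi_{2,i})$ of the second limit, which enters the resurrection distribution $(h_i)$. Once both are available, the result follows by substitution into the Laplace transform $\hat f_j(\lambda_0+\mu_A)=\int_0^\infty e^{-(\lambda_0+\mu_A)t}\sum_{i}h_i\,p_{i,j}(t)\,dt$, where $\lambda_0+\mu_A$ is the killing rate of aggressive orders.

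First I would solve for the transient probabilities. With $g_{1,1}=1$ one has $G_1(z)=z$, so the PDE \eqref{eq:PDEvarphi} for $\varphi_i(z,t)=\sum_j r_{ij}(t)z^j$ reduces to the first-order linear equation $\partial_t\varphi_i=\theta_1(1-z)\partial_z\varphi_i-\lambda_1(1-z)\varphi_i$ with $\varphi_i(z,0)=z^i$, which I would integrate by the method of characteristics. Equivalently, one recognizes $Y$ as an immigration--death ($M/M/\infty$) process, in which each of the $i$ initial shares survives independently with probability $e^{-\theta_1 t}$ while the immigrants arriving at rate $\lambda_1$ and dying at rate $\theta_1$ contribute a Poisson term of mean $\frac{\lambda_1}{\theta_1}(1-e^{-\theta_1 t})$. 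Either route yields
\[
\varphi_i(z,t)=\bigl(1+(z-1)e^{-\theta_1 t}\bigr)^i\exp\!\Bigl[\frac{\lambda_1}{\theta_1}(z-1)\bigl(1-e^{-\theta_1 t}\bigr)\Bigr],
\]
the product of a $\mathrm{Binomial}(i,e^{-\theta_1 t})$ and a $\mathrm{Poisson}\bigl(\frac{\lambda_1}{\theta_1}(1-e^{-\theta_1 t})\bigr)$ generating function.

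Extracting the coefficient of $z^j$ as the convolution of these two laws, and using $p_{i,j}(t)=r_{i-1,j-1}(t)$, gives precisely the integrand of \eqref{eq:pij1a}: the index $k$ counts the surviving initial shares, the factor $\binom{i-1}{k}e^{-k\theta_1 t}(1-e^{-\theta_1 t})^{i-1-k}$ is the binomial part, and the factor $\frac{1}{(j-1-k)!}(\frac{\lambda_1}{\theta_1})^{j-1-k}(1-e^{-\theta_1 t})^{j-1-k}e^{-\frac{\lambda_1}{\theta_1}(1-e^{-\theta_1 t})}$ is the Poisson part, the two powers of $(1-e^{-\theta_1 t})$ combining into the exponent $i+j-2-2k$. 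For the second limit, $g_{2,1}=1$ gives $G_2(u)=u$, so $\frac{1-G_2(u)}{1-u}\equiv 1$ and \eqref{eq:GenFunVolumeIB} yields $\psi(z)=\rho_{2,0}e^{(\lambda_2/\theta_2)z}$; normalizing at $z=1$ identifies $\psi$ as the generating function of a $\mathrm{Poisson}(\lambda_2/\theta_2)$ law, so $\pi_{2,i}=\rho_{2,i-1}=e^{-\lambda_2/\theta_2}\frac{(\lambda_2/\theta_2)^{i-1}}{(i-1)!}$. Inserting this into the mixture $h_i=\frac{\lambda_0 g_{0,i}}{\lambda_0+\mu_A}+\frac{\mu_A}{\lambda_0+\mu_A}\pi_{2,i}$ produces \eqref{eq:hiUnit}, and substituting the explicit $p_{i,j}(t)$ into the Laplace transform, then pulling the finite sum over $k$ out of the integral (justified by Tonelli, all terms being nonnegative), yields \eqref{eq:pij1a}.

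I expect the genuine obstacle to be the first step: solving the PDE and, above all, extracting the coefficient of $z^j$ in closed form. The product structure of $\varphi_i$ is clean, but writing $r_{i-1,j-1}(t)$ explicitly requires carefully convolving the binomial and Poisson parts and matching the powers of $(1-e^{-\theta_1 t})$, which is exactly where the index shift $i+j-2-2k$ in \eqref{eq:pij1a} originates. The Poisson stationary law of the second limit and the final assembly of the Laplace transform are then routine.
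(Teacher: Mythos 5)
Your proposal is correct and follows essentially the same route as the paper: solve the PDE \eqref{eq:PDEvarphi} with $G_1(z)=z$ to get the binomial--Poisson product form of $\varphi_i(z,t)$ (your version, with exponent $\frac{\lambda_1}{\theta_1}(z-1)(1-e^{-\theta_1 t})$, is in fact the correct one --- the paper's displayed $\varphi_i$ contains a typo, as its own $r_{ij}(t)$ confirms), extract $r_{ij}(t)$ as the convolution of the $\mathrm{Binomial}(i,e^{-\theta_1 t})$ and $\mathrm{Poisson}$ parts (equivalent to the paper's Leibniz-formula computation), shift indices, and take the Laplace transform at the killing rate $\lambda_0+\mu_A$, with the second limit handled identically via $G_2(u)=u$ giving the $\mathrm{Poisson}(\lambda_2/\theta_2)$ law in \eqref{eq:hiUnit}. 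Your probabilistic $M/M/\infty$ reading is a pleasant shortcut but does not change the substance of the argument.
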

\begin{proof}
The unit-size assumption gives $G_1(z)=z$, and inserting this in equation \eqref{eq:PDEvarphi} allows a direct solving of the latter as:
\begin{equation}
	\varphi_i(z,t) = \left[1-(1-z)e^{-\theta_1 t}\right]^i \exp\left[\frac{\lambda_1}{\theta_1}\left(z-(1-z)e^{-\theta_1 t}\right)\right],
\end{equation}
which then gives after some computations, using Leibniz differentiation formula, the transition probabilities:
\begin{equation}
r_{ij}(t) = \sum_{k=0}^{\min(i,j)} \frac{i!}{k!(i-k)!(j-k)!} \left(\frac{\lambda_1}{\theta_1}\right)^{j-k} e^{-k\theta_1 t} (1-e^{-\theta_1 t})^{i+j-2k} e^{-\frac{\lambda_1}{\theta_1}(1-e^{-\theta_1 t})}.
\end{equation}
Shifting both indices by one, multiplying by $h_i$, summing over $i$ and taking the Laplace transform yields equation \eqref{eq:pij1a}.

Furthermore, still using the unit-size assumption, the stationary distribution of a process with generator defined in equation \eqref{eq:InfGenSL} is a Poisson distribution with parameter $\frac{\lambda_2}{\theta_2}$. This readily gives equation \eqref{eq:hiUnit}.
\end{proof}

\begin{proposition}(Model 1b)
If all limit orders submitted at the best quote and inside the book are i.i.d. and geometrically-distributed with parameter $q_1$ and $q_2$ respectively, then the stationary distribution $(\pi_j)_{j\in\mathbb N^*}$ of the volume offered at the best quote is:
\begin{equation}
\begin{array}{rl}
\pi_j = & (\lambda_0+\mu_A) \Bigg[ \sum_{i=1}^\infty h_i \bigg[ 
q_1\frac{\lambda_1}{\theta_1} \sum_{k=0}^{\min(i-1,j-2)} \binom{i-1}{k} \frac{1}{(j-1-k)!} \sum_{l=0}^{j-2-k} \binom{j-1-k}{l} (-1)^l 
	\nonumber
	\\ & \times \prod_{\alpha=1}^l \left(\frac{\lambda_1}{\theta_1}-\alpha(1-q_1)\right) \prod_{\beta=1}^{j-2-k-l} \left(\frac{\lambda_1}{\theta_1}+\beta(1-q_1)\right)
	\nonumber
	\\ & \times \int_0^{+\infty} (1-e^{-\theta_1 t})^{i-k} e^{-(l+k)\theta_1 t} \left[q_1+(1-q_1)e^{-\theta_1 t}\right]^{\frac{\lambda_1}{\theta_1(1-q_1)}-l-1} e^{-(\lambda_0+\mu)t}\,dt
	\nonumber \bigg]
	\\ + & \sum_{i=j}^\infty h_i \binom{i-1}{j-1} \int_0^{+\infty} e^{-(j-1)\theta_1 t} (1-e^{-\theta_1 t})^{i-j}
	\left[q_1+(1-q_1)e^{-\theta_1 t}\right]^{\frac{\lambda_1}{\theta_1(1-q_1)}} e^{-(\lambda_0+\mu)t}\,dt
	\Bigg],
\label{eq:pij1b}
\end{array}
\end{equation}
and
\begin{equation}
h_i = \frac{\lambda_0 g_{0,i}}{\mu_A+\lambda_0} + \frac{\mu_A}{\mu_A+\lambda_0} q_2^{\frac{\lambda_2}{(1-q_2)\theta_2}} \frac{(1-q_2)^{i-1}}{(i-1)!} \frac{\Gamma(i-1+\frac{\lambda_2}{(1-q_2)\theta_2})}{\Gamma(\frac{\lambda_2}{(1-q_2)\theta_2})}.
\label{eq:hiModel1b}
\end{equation}
\end{proposition}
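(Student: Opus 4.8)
The plan is to follow the same three-stage route as in Model~1a: (i) solve the generating-function PDE \eqref{eq:PDEvarphi} explicitly for the geometric size law, (ii) read off the transition probabilities $r_{ij}(t)$ from the resulting $\varphi_i(z,t)$, and (iii) shift both indices, weight by $h_i$, and apply the Laplace transform of the central Proposition at $s=\lambda_0+\mu_A$; in parallel one identifies $h_i$ from the stationary law of the second limit via \eqref{eq:GenFunVolumeIB}. For a geometric distribution the generating function is $G_1(z)=q_1 z/(1-(1-q_1)z)$, so that $1-G_1(z)=(1-z)/(1-(1-q_1)z)$, and I would substitute this into \eqref{eq:PDEvarphi}.

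First I would solve \eqref{eq:PDEvarphi} by the method of characteristics. The characteristics obey $\dot z=-\theta_1(1-z)$, i.e. $1-z(s)=(1-z)e^{\theta_1(s-t)}$, carrying a point $(z,t)$ back to $\zeta:=1-(1-z)e^{-\theta_1 t}$ at time $0$, where $\varphi_i(\zeta,0)=\zeta^i$. Integrating the transport term $\lambda_1(1-G_1)$ along the characteristic (the substitution $u=1-z(s)$ reduces it to $\int du/(q_1+(1-q_1)u)$) gives, with $A:=\lambda_1/(\theta_1(1-q_1))$,
\begin{equation*}
\varphi_i(z,t)=\zeta^{\,i}\left(\frac{1-(1-q_1)\zeta}{1-(1-q_1)z}\right)^{A},\qquad \zeta=1-(1-z)e^{-\theta_1 t}.
\end{equation*}
As a check, letting $q_1\to 1$ (so $A\to\infty$) the bracket tends to $\exp[-\tfrac{\lambda_1}{\theta_1}(\zeta-z)]=\exp[-\tfrac{\lambda_1}{\theta_1}(1-z)(1-e^{-\theta_1 t})]$, recovering the Model~1a generating function and hence the same $r_{ij}(t)$. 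Writing $\tau=e^{-\theta_1 t}$ and $B=q_1+(1-q_1)\tau$, the identity $1-(1-q_1)\zeta=q_1+(1-q_1)\tau(1-z)=B-(1-q_1)\tau z$ puts $\varphi_i$ in the factored form $\varphi_i(z,t)=\zeta^{i}B^{A}(1-\tfrac{(1-q_1)\tau}{B}z)^{A}(1-(1-q_1)z)^{-A}$.

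Next I would extract $r_{ij}(t)=[z^{j}]\varphi_i(z,t)$ by applying Leibniz' rule to the product $\zeta^{i}\cdot W$, where $W(z)=(1-\tfrac{(1-q_1)\tau}{B}z)^{A}(1-(1-q_1)z)^{-A}$ and $\zeta^i=((1-\tau)+\tau z)^i$. This naturally produces the two-part structure of \eqref{eq:pij1b}: the term in which all $j$ derivatives fall on $\zeta^i$ leaves $W$ undifferentiated, contributing $\binom{i-1}{j-1}(1-\tau)^{i-j}\tau^{j-1}B^{A}$, which is the $\sum_{i\ge j}$ sum; the terms in which $W$ is differentiated at least once carry a factor $B^{A-l-1}$ and, since every derivative of $u(z)=q_1(1-\tau)/(1-(1-q_1)z)$ at $0$ is proportional to $q_1(1-\tau)$ while $A(1-q_1)=\lambda_1/\theta_1$, the prefactor $q_1\lambda_1/\theta_1$ of the double sum. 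The two factorial products $\prod_\alpha(\tfrac{\lambda_1}{\theta_1}-\alpha(1-q_1))$ and $\prod_\beta(\tfrac{\lambda_1}{\theta_1}+\beta(1-q_1))$ are precisely the falling and rising factorials hidden in $\binom{A}{l}$ and $\binom{A+m-1}{m}$ once a factor $(1-q_1)$ is absorbed into each and $A(1-q_1)$ is replaced by $\lambda_1/\theta_1$. I then set $p_{ij}=r_{i-1,j-1}$, multiply by $h_i$, sum over $i\ge 1$, and compute $\widehat f_j(\lambda_0+\mu_A)=\int_0^\infty e^{-(\lambda_0+\mu_A)t}(\cdots)\,dt$ as in \eqref{eq:StatDistVolBestQuote}, which supplies the time integrals and the factor $(\lambda_0+\mu_A)$. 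Finally, inserting $G_2(z)=q_2 z/(1-(1-q_2)z)$, for which $\tfrac{1-G_2(u)}{1-u}=\tfrac{1}{1-(1-q_2)u}$, into \eqref{eq:GenFunVolumeIB} gives $\psi(z)=q_2^{\,a}(1-(1-q_2)z)^{-a}$ with $a=\lambda_2/(\theta_2(1-q_2))$ (normalized by $\psi(1)=1$); this negative binomial law has coefficients $\rho_{2,i-1}=\pi_{2,i}$ equal to the Gamma-ratio of \eqref{eq:hiModel1b}, and combining with $\lambda_0 g_{0,i}/(\lambda_0+\mu_A)$ through $h_i=\tfrac{\lambda_0 g_{0,i}}{\lambda_0+\mu_A}+\tfrac{\mu_A\pi_{2,i}}{\lambda_0+\mu_A}$ yields \eqref{eq:hiModel1b}.

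I expect the one genuinely laborious step to be the coefficient extraction of the previous paragraph, namely reorganizing the Leibniz expansion into exactly the two sums of \eqref{eq:pij1b}. The delicate bookkeeping lies in matching the summation ranges and the precise limits of the two factorial products after the $(1-q_1)$ factors have been absorbed (the shifts by one unit in the exponents of $1-e^{-\theta_1 t}$ and of $B$, and the appearance of $q_1\lambda_1/\theta_1$, all stem from isolating the undifferentiated boundary term and re-indexing the products). The characteristics solve and the $h_i$ computation are, by contrast, routine.
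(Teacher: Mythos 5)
Your proposal is correct and takes essentially the same route as the paper's own proof: solve the generating-function PDE \eqref{eq:PDEvarphi} with $G_1(z)=q_1z/(1-(1-q_1)z)$ to obtain the closed form $\varphi_i(z,t)=\zeta^i\bigl[(1-(1-q_1)\zeta)/(1-(1-q_1)z)\bigr]^{\lambda_1/(\theta_1(1-q_1))}$ (which the paper simply states, whereas you derive it by characteristics and check the $q_1\to1$ degeneration to Model 1a), extract $r_{ij}(t)$ via Leibniz with exactly the undifferentiated-boundary-term versus differentiated-term split that yields the two sums of \eqref{eq:pij1b}, then shift indices, weight by $h_i$ and take the Laplace transform at $\lambda_0+\mu_A$, while the $h_i$ computation via $G_2$ in \eqref{eq:GenFunVolumeIB}, normalization $\psi(1)=1$ and the negative-binomial coefficients is identical to the paper's. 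The differences are purely presentational (your explicit characteristics and factored form $\zeta^iB^A(1-\tfrac{(1-q_1)\tau}{B}z)^A(1-(1-q_1)z)^{-A}$ make precise what the paper dismisses as ``some computations''), so nothing further is needed.
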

\begin{proof}
The assumption of a geometric distribution of the sizes of limit orders inside the book gives $G_1(z)=\frac{q_1z}{1-(1-q_1)z}$. With this definition of $G_1$, we can solve equation \eqref{eq:PDEvarphi} to obtain:
\begin{equation}
	\varphi_i(z,t) = [1-(1-z)e^{-\theta_1 t}]^i \left[\frac{q_1+(1-q_1)(1-z)e^{-\theta_1 t}}{1-(1-q_1)z}\right]^{\frac{\lambda_1}{\theta(1-q_1)}}.
\end{equation}
The Leibniz differentiation formula and some computations lead to the transient probabilities:
\begin{align}
	r_{ij}(t) = & q_1\frac{\lambda_1}{\theta_1} \sum_{k=0}^{\min(i,j-1)} \frac{i!}{k!(i-k)!(j-k)!} (1-e^{-\theta_1 t})^{i-k+1}
	\nonumber
	\\ & \times \sum_{l=0}^{j-1-k} \binom{j-1-k}{l} (-1)^l e^{-(l+k)\theta_1 t} \left[q_1+(1-q_1)e^{-\theta_1 t}\right]^{\frac{\lambda_1}{\theta_1(1-q_1)}-l-1}
	\nonumber
	\\ & \times \prod_{\alpha=1}^l \left(\frac{\lambda_1}{\theta_1}-\alpha(1-q_1)\right) \prod_{\beta=1}^l \left(\frac{\lambda_1}{\theta_1}+\beta(1-q_1)\right)
	\nonumber
	\\ & + \mathbf 1_{j\leq i} \binom{i}{j} e^{-j\theta_1 t} (1-e^{-\theta_1 t})^{i-j}
	\left[q_1+(1-q_1)e^{-\theta_1 t}\right]^{\frac{\lambda_1}{\theta_1(1-q_1)}}.
\end{align}
As in the previous case, shifting the indices by one, multiplying by $h_i$, summing over $i$ and taking the Laplace transform yields equation \eqref{eq:pij1b}.

Now, as for the stationary distribution of the volume offered at the second limit, the assumption of geometrically distributed sizes of limit orders gives $G_2(z)=\frac{q_2z}{1-(1-q_2)z}$, and equation \eqref{eq:GenFunVolumeIB} gives by derivation and after some computations:
\begin{equation}
	\forall i\in\mathbb N, \rho_{2,i} = q_2^{\frac{\lambda_2}{(1-q_2)\theta_2}} \frac{(1-q_2)^i}{i!} \frac{\Gamma(i+\frac{\lambda_2}{(1-q_2)\theta_2})}{\Gamma(\frac{\lambda_2}{(1-q_2)\theta_2})},
\end{equation}
hence the result.
Note that the distribution $(\rho_{2,i})_{i\in\mathbb N}$ is a negative binomial distribution (or Polya distribution) with non-integer size parameter $\frac{\lambda_2}{(1-q_2)\theta_2}$ and probability parameter $q_2$ \citep[see e.g.][Chap. VI.]{Feller1968}.
\end{proof}

\section{Models with both partial and aggressive market orders}
\label{sec:ModelsWithBothMarketOrders}

Type-2 models allow for both partial and aggressive market orders to be submitted, i.e. $\mu>0$ and $\mu_A>0$.
With both types of market orders, the direct approach of the previous section does not provide analytically tractable results. Therefore we propose a different strategy.
In this new setting, we can keep the analytical tractability of the model at the cost of assuming that classic orders directly affecting the best quote are unit-sized. This is the only restriction: orders submitted inside the spread or inside the book can be kept with a general distribution, and the aggressive market orders are still defined the same way, obviously, with a size equal to the volume at the best quote.
With this assumption, the best quote is a birth-and-death process, for which we can compute the Laplace transform of its transition probabilities, which can be expressed using continuous fractions.
The original result dates back to \citet{Murphy1975} but a modern derivation of the result is found in \citet{Crawford2012}.
Therefore, we are able to study the stationary distribution of the volume available at the best quote without computing the transient probabilities of the process $Y$ with infinitesimal generator \eqref{eq:InfGenFL} and $\mu>0$ (to our knowledge, such a computation is still an unresolved challenge).

Assume that both partial market orders and limit orders submitted at the best quote are unit-sized.
Then the process $Y$, which translates the evolution of the volume at the best quote (minus one) without any price movement, is a birth-and-death process with constant birth-immigration rate $\lambda_1$ and linear death-emigration rate $\mu+n\theta_1$ for any $n\geq 0$.
Let $(q_{m,n}(t)), (m,n)\in\mathbb N^2, t\in[0,\infty)$ be the transition probabilities of the process $Y$, and $(\hat q_{m,n}(s)), (m,n)\in\mathbb N^2, s\in\mathbb C$ their Laplace transform, if it exists.
Let $(B_n(s))_{n\in\mathbb N}$ the real sequence defined by the two-step recurrence:
\begin{equation}
	\left\{\begin{array}{l}
	B_0(s)=1, \;\; B_1(s)=s+\lambda_1,
	\\ B_n(s)=\left(s+\lambda_1+\mu+(n-1)\theta_1\right)B_{n-1} - \lambda_1\left(\mu+(n-1)\theta_1\right) B_{n-2}, \; n\geq 2.
	\end{array}\right.
\end{equation}

Then, adapting \citet[Theorem 1]{Crawford2012} to our special case, we have for any $(m,n)\in\mathbb N^2$ such that $m\leq n$ :
\begin{equation}
\label{eq:hatqmleqn}
\hat q_{m,n}(s)=\lambda_1^{n-m}  \frac{\hat a_1}{\hat b_1+\frac{\hat a_2}{\hat b_2+\frac{\hat a_3}{\hat b_3+\ldots}}}
\triangleq
\lambda_1^{n-m} \frac{\hat a_1(s)}{\hat b_1(s) +}\frac{\hat a_2(s)}{\hat b_2(s) +}\frac{\hat a_3(s)}{\hat b_3(s) +} \ldots
\end{equation}
(The symbol $\triangleq$ is the definition equality that introduce a simplified notation for the continuous fractions.)
In the above equation the sequences $(\hat a_i)_{i\in\mathbb N^*}$ and $(\hat b_i)_{i\in\mathbb N^*}$ are defined as follows:
\begin{equation}
\hat a_i=\begin{cases}
	B_m(s) & \text{ if } i=1,
	\\ 
	- \lambda_1\left(\mu+(n+1)\theta_1\right) B_n(s) & \text{ if } i=2,
	\\ 
	- \lambda_1\left(\mu+(n+i-1)\theta_1\right) & \text{ if } i\geq 3,
\end{cases}
\end{equation}
and
\begin{equation}
\hat b_i=\begin{cases}
	B_{n+1}(s) & \text{ if } i=1,
	\\ 
	s+\lambda_1+\mu+(n+i-1)\theta_1 & \text{ if } i\geq 2.
\end{cases}
\end{equation}
The result for any $(m,n)\in\mathbb N^2$ such that $m\geq n$ is similarly written:
\begin{equation}
\label{eq:hatqmgeqn}
\hat q_{m,n}(s)=
\left(\prod_{j=n+1}^{m}(\mu+j\theta_1)\right) \frac{\hat \alpha_1(s)}{\hat \beta_1(s) +}\frac{\hat \alpha_2(s)}{\hat \beta_2(s) +}\frac{\hat \alpha_3(s)}{\hat \beta_3(s) +} \ldots
\end{equation}
where the sequences $(\hat \alpha_i)_{i\in\mathbb N^*}$ and $(\hat \beta_i)_{i\in\mathbb N^*}$ are defined as follows:
\begin{equation}
\hat \alpha_i=\begin{cases}
	B_n(s) & \text{ if } i=1,
	\\ 
	- \lambda_1\left(\mu+(m+1)\theta_1\right) B_m(s) & \text{ if } i=2,
	\\ 
	- \lambda_1\left(\mu+(m+i-1)\theta_1\right) & \text{ if } i\geq 3,
\end{cases}
\end{equation}
and
\begin{equation}
\hat \beta_i=\begin{cases}
	B_{m+1}(s) & \text{ if } i=1,
	\\ 
	s+\lambda_1+\mu+(m+i-1)\theta_1 & \text{ if } i\geq 2.
\end{cases}
\end{equation}

Hence, the Laplace transforms $(\hat q_{m,n}(s)), (m,n)\in\mathbb N^2, s\in\mathbb C$ are numerically computable using appropriate numerical methods for the computation of continuous functions.
If we now go back to the killing and resurrection of Markov processes, we have the following result.
\begin{proposition}[Type-2 models]
If all orders submitted at the best quote are unit-sized, then the stationary distribution of the volume offered at the best quote is given by equation \eqref{eq:StatDistVolBestQuote}, i.e. $\pi_j=(\lambda_0+\mu_A) \sum_{m=1}^\infty h_m \hat q_{m-1,j-1}(\lambda_0+\mu_A)$, where the  $\hat q_{m,n}$'s are given by \eqref{eq:hatqmleqn} and \eqref{eq:hatqmgeqn}.

If limit orders submitted inside the book are assumed to be unit-sized as well (model 2a), then the probabilities $h_m$ are given by equation \eqref{eq:hiUnit}. If the sizes of these limit orders are assumed to be geometrically distributed (model 2b), then the probabilities $h_m$ are obtained with equation \eqref{eq:hiModel1b}.
\end{proposition}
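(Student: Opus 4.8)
The plan is to assemble the result directly from the general Proposition of Section~\ref{sec:GeneralModel} together with the continued-fraction expressions for the Laplace transforms $\hat q_{m,n}$ just derived from \citet{Crawford2012}. The only feature distinguishing Type-2 models from Type-1 models is the dynamics of the best quote itself, which is now the birth-and-death process $Y$ carrying the extra emigration rate $\mu>0$; the resurrection mechanism, the killing rate $\beta=\lambda_0+\mu_A$, the inaccessibility of state $0$, and the structure of the resurrection law $H$ are all unchanged. Consequently I expect the argument to be a matter of substituting the new transient quantities into the master formula \eqref{eq:StatDistVolBestQuote} and then reusing, verbatim, the second-limit computations already carried out for Models~1a and~1b.

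For the first claim I would start from \eqref{eq:StatDistVolBestQuote}, namely $\pi_j=(\lambda_0+\mu_A)\,\hat f_j(\lambda_0+\mu_A)$, where $\hat f_j(s)$ is the Laplace transform of $t\mapsto\sum_{i\in\mathbb N} h_i\,p_{i,j}(t)$. Under the unit-size assumption on orders at the best quote, $p_{i,j}(t)=\mathbf P(1+Y(t)=j\mid 1+Y(0)=i)=q_{i-1,j-1}(t)$, so that after re-indexing $i=m$,
\begin{equation}
\hat f_j(s)=\int_0^\infty e^{-st}\sum_{m=1}^\infty h_m\,q_{m-1,j-1}(t)\,dt.
\end{equation}
I would then interchange the summation and the Laplace integral to obtain $\hat f_j(s)=\sum_{m=1}^\infty h_m\,\hat q_{m-1,j-1}(s)$, and finally substitute the closed forms \eqref{eq:hatqmleqn} and \eqref{eq:hatqmgeqn} according to whether $m-1\leq j-1$ or $m-1\geq j-1$. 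Evaluating at $s=\lambda_0+\mu_A$ yields exactly $\pi_j=(\lambda_0+\mu_A)\sum_{m=1}^\infty h_m\,\hat q_{m-1,j-1}(\lambda_0+\mu_A)$.

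For the second claim the essential observation is that the generator \eqref{eq:InfGenSL} governing the volume at the second limit does not involve $\mu$: partial market orders act only on the best quote, never on the next non-empty limit. Hence the stationary distribution $(\pi_{2,i})_{i\in\mathbb N^*}$ is identical to the one computed in Section~\ref{sec:ModelsWithAggressiveMarketOrdersExclusively}. Inserting it into the mixture $h_i=\frac{\lambda_0 g_{0,i}}{\lambda_0+\mu_A}+\frac{\mu_A\pi_{2,i}}{\lambda_0+\mu_A}$ reproduces \eqref{eq:hiUnit} in the unit-size case (Poisson second limit, Model~2a) and \eqref{eq:hiModel1b} in the geometric case (negative-binomial second limit, Model~2b), with no further computation.

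The main obstacle is the term-by-term interchange of the infinite sum over $m$ with the Laplace integral. I would justify it by Tonelli's theorem: for $s=\lambda_0+\mu_A>0$ every factor $h_m\geq 0$, $q_{m-1,j-1}(t)\geq 0$ and $e^{-st}\geq 0$ is nonnegative, so the double sum--integral may be evaluated in either order irrespective of finiteness, and the resulting series $\sum_m h_m\,\hat q_{m-1,j-1}(s)$ is automatically well defined in $[0,+\infty]$; its finiteness is guaranteed by the general Proposition, which already established that $(\pi_j)$ is a probability distribution. A secondary point worth a line is the convergence of the continued fractions in \eqref{eq:hatqmleqn}--\eqref{eq:hatqmgeqn} at the real point $s=\lambda_0+\mu_A$, which is inherited from \citet{Crawford2012} since $\lambda_0+\mu_A>0$.
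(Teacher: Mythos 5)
Your proposal is correct and follows essentially the same route as the paper, which also obtains the result by inserting the continued-fraction Laplace transforms $\hat q_{m,n}$ of \citet{Crawford2012} into the master formula \eqref{eq:StatDistVolBestQuote} (with the index shift from $X=1+Y$) and by observing that the generator \eqref{eq:InfGenSL} of the second limit is untouched by $\mu$, so the $h_m$ computations of Models 1a and 1b carry over verbatim. Your explicit Tonelli justification for the sum--integral interchange and the remark on convergence of the continued fractions are details the paper leaves implicit, but they do not change the argument.
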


\section{Empirical results}
\label{sec:EmpiricalResults}

We now provide an empirical study that allows the comparison of the models theoretically solved in the previous sections.

\subsection{Data and main parameters estimation}
\label{subsec:DataMainParamertersEstimation}

We use Thom\-son-Reu\-ters tick-by-tick data for eleven stocks traded on the Paris stock exchange, from January 4th, 2010 to February 22nd, 2010. The eleven stocks under investigation are: Air Liquide (AIRP.PA, chemicals), Alstom (ALSO.PA, transport and energy), Axa (AXAF.PA, insurance), BNP Paribas (BNPP.PA, banking), Bouygues (BOUY.PA, construction, telecom and media), Carrefour (CARR.PA, retail distribution), Danone (DANO.PA, milk and cereal products), Michelin (MICP.PA, tires manufacturing), Renault (RENA.PA, vehicles manufacturing), Sanofi (SASY.PA, healthcare), Vinci (SGEF.PA, construction and engineering).
All these stocks are included in the CAC 40 French index, i.e. they are among the largest market capitalizations and most liquid stocks on the Paris stock exchange.

For each stock, for each available trading day, we consider the data from 10:00 a.m. to 16:00 p.m., i.e. a six-hour period at the heart of the trading day. The idea is to get rid of the very busy opening and closing period where the assumption of a stationary model may be difficult to fulfill. It is well-known that even on the six-hour period considered, one does observe a seasonal activity (the U-shaped pattern of financial activity), and one might get better "stationary" results by shortening the daily period under investigation. However, it will appear that our models actually provides surprisingly satisfying results even using the full day sample.
For each stock, for each trading day, we compute the total numbers and the distributions of the sizes of: limit orders inside the spread; limit orders at the best quote; limit orders at the second best limit; partial market orders; and aggressive market orders. We also compute the time-weighted empirical distribution of the volume offered at the best quote and at the second best limit.

Straightforwardly, the estimators of the Poisson parameters $\lambda_0, \lambda_1, \lambda_2, \mu$ and $\mu_A$ (if needed by the model) are defined as the number of the associated events (aggressive limit order, limit order at the best quote, limit order inside the book, partial market orders, aggressive market orders) divided by the length of the time interval. For each of these types of orders, we also compute their respective mean order size $\sigma_0, \sigma_1, \sigma_2, \sigma_{\mu}$ and $\sigma_{\mu_A}$.
As for the cancellation parameters, we do not have any data allowing us to track the submitted orders individually, and therefore we cannot easily estimate an average lifetime $\theta_1^{-1}$ and $\theta_2^{-1}$ for cancelled orders at the best quote and inside the book.
However, we can get an order of magnitude by using equilibrium relations of incoming and outgoing flows of the order book. Our data let us compute the time-weighted average volume offered at the best quote $L_1$ and at the second best quote $L_2$. Then equating the average number of incoming and outgoing share in the order book, we set $\theta_1$ and $\theta_2$ so that $\lambda_1\sigma_1 = \mu\sigma_{\mu}+\mu_A\sigma_{\mu_A}+\theta_1 L_1$ and $\lambda_2\sigma_2 = \theta_2 L_2$.

Finally, following our theoretical framework, we will assume that all partial market orders are unit-sized, and therefore rescale all size and volume quantities by the average trade size. The rescaling gives us the empirical versions of the distributions $(g_{0,i}), (g_{1,i}), (g_{2,i})$ and $(\pi_{2,i})$ (if needed by the model) with a support roughly included in ${1,\ldots,50}$ (stock-dependent order of magnitude).

\subsection{Fitted models and benchmarks}

In the previous section we have presented two types of models (namely type-1, without partial market order, and type-2, with unit-sized partial market orders), each type having two variants (namely, a and b). In variants a, the distribution $(g_{2,i})_{i\in\mathbb N^*}$, which represents the volume of the limit orders submitted inside the book, is a Dirac distribution on the atom $1$.
In variants b, it is geometric with parameter $0<q_2\leq 1$.
The distribution $(g_{0,i})_{i\in\mathbb N^*}$, which represents the volume of the limit orders submitted inside the spread, has not been specified up to now. In line with the other assumptions, we will assume in this empirical section that this distribution is a Dirac on $1$ in variants a, and a geometric distribution with parameter $0<q_0\leq 1$ in variants b.

We now add further elements of comparison for our models.
First, for each type of model, we add a variant c in which both distributions $(g_{0,i})_{i\in\mathbb N^*}$ and $(g_{2,i})_{i\in\mathbb N^*}$ are taken equal to their empirical counterpart.
Furthermore, in order to underline the importance of the mechanism that takes into account aggressive orders and consequent upward and downward movements of the best price, we recall as benchmark the following simplistic model of the best quote, in which aggressive market and limit orders are ignored, which is equivalent to assume a constant price in our setting. This benchmark will be referred to as the Type-0 model.
In the Type-0 model, limit orders arrive at rate $\lambda_1$ with volume distribution $(g_{1,i})_{i\in\mathbb N}$ ; market orders are unit-sized and arrive at rate $\mu$ ; all standing shares have a (i.i.d.) exponential lifetime with parameter $\theta_1>0$. It is thus easily shown that the volume at the best quote in the Type-0 model is the Markov process with infinitesimal generator given at equation \eqref{eq:InfGenFL}. This process admits a stationary distribution $(\pi_i)_{i\in\mathbb N}$ satisfying the following recurrence :
\begin{equation}
	\left\{\begin{array}{rcl}
	0 & = & -\lambda_1 \pi_0 + (\mu+\theta_1)\pi_1,
	\\
	0 & = & -(\lambda_1 +\mu+n\theta_1)\pi_{n} + (\mu+ (n+1) \theta_1)\pi_{n+1} + \lambda_1 \sum_{i=1}^n g_{1,i}\pi_{n-i} \;\; (n\geq 1),
	\end{array}\right.
\end{equation}
with
\begin{equation}
	\pi_0 = \left( \frac{\mu}{\theta_1} 
	\int_0^1 u^{\frac{\mu}{\theta_1}-1} e^{\frac{\lambda_1}{\theta_1}\int_u^1 H(v)\,dv}\,du,
	\right)^{-1}
\end{equation}
where $H(u)=\frac{1-G_1(u)}{1-u}$, $G_1$ being the generating function of the distribution $(g_{1,i})_{i\in\mathbb N}$ \citep[see e.g.][section 4]{MuniToke2014}.
We will consider two cases: unit-sized limit orders (Model 0a) and limit orders with geometrically-distributed size with parameter $q_1$ (Model 0b).
In line with the general models, one may assume that the last share cannot be cancelled or executed by shifting the indices of distribution $(\pi_i)_{i\in\mathbb N}$ by $1$ (i.e. on $\mathbb N^*$).

Finally, we add a second benchmark by simulating our general zero-intelligence model of the best quote with all distributions of the model equal to their empirical counterpart. This could be considered as the "best effort" of a zero-intelligence mechanism with both classic and aggressive market and limit orders to compute the distribution of the volume offered at the best quote. Note that we cannot solve analytically the distribution in this general setting: the distribution in this case is numerically estimated by simulation. This simulated benchmark will be referred to as Type-3 model.

Table \ref{table:ModelsSummary} summarizes the models that are under study here.
\begin{table}
\begin{center}
\begin{tabular}{|c|c|c|cc|c|c|}
\hline
Model type & $(g_{0,i})_{i\in\mathbb N}$ & $(g_{1,i})_{i\in\mathbb N}$ & $(g_{2,i})_{i\in\mathbb N}$ & $(\pi_{2,i})_{i\in\mathbb N}$ & $\mu>0$ ? & $\lambda_0>0, \mu_A>0$ ?
\\ \hline
Model 0a & None & Unit-size & None & None & Yes & No
\\ Model 0b & None & Geometric & None & None & Yes & No
\\ \hline
Model 1a & Unit-size & Unit-size & Unit-Size & \textit{Poisson} & No & Yes
\\ Model 1b & Geometric & Geometric & Geometric & \textit{Neg. binomial} & No & Yes
\\ Model 1c & Empirical & Geometric & \textit{None} & Empirical & No & Yes
\\ \hline
Model 2a & Unit-size & Unit-size & Unit-Size & \textit{Poisson} & Yes & Yes
\\ Model 2b & Geometric & Unit-size & Geometric & \textit{Neg. binomial} & Yes & Yes
\\ Model 2c & Empirical & Unit-size & \textit{None} & Empirical & Yes & Yes
\\ \hline
Model 3 & Empirical & Empirical & \textit{None} & Empirical & Yes & Yes
\\ \hline
\end{tabular}
\caption{Summary of the different models and their variants. Italic means that the distribution $(\pi_{2,i})_{i\in\mathbb N}$ (resp. $(g_{2,i})_{i\in\mathbb N}$) in these variants is not a free parameter, but a consequence of the choice of $(g_{2,i})_{i\in\mathbb N}$ (resp. $(\pi_{2,i})_{i\in\mathbb N}$).}
\label{table:ModelsSummary}
\end{center}
\end{table}
On Figure \ref{fig:AIRP.PA_20100101_20100228_ISSizeLimit} the empirical distribution for the stock AIRP.PA of the size of the limit orders submitted inside the spread is plotted, along with its geometric MLE fit.
\begin{figure}
\begin{tabular}{cc}
\includegraphics[angle=270,width=0.47\textwidth]{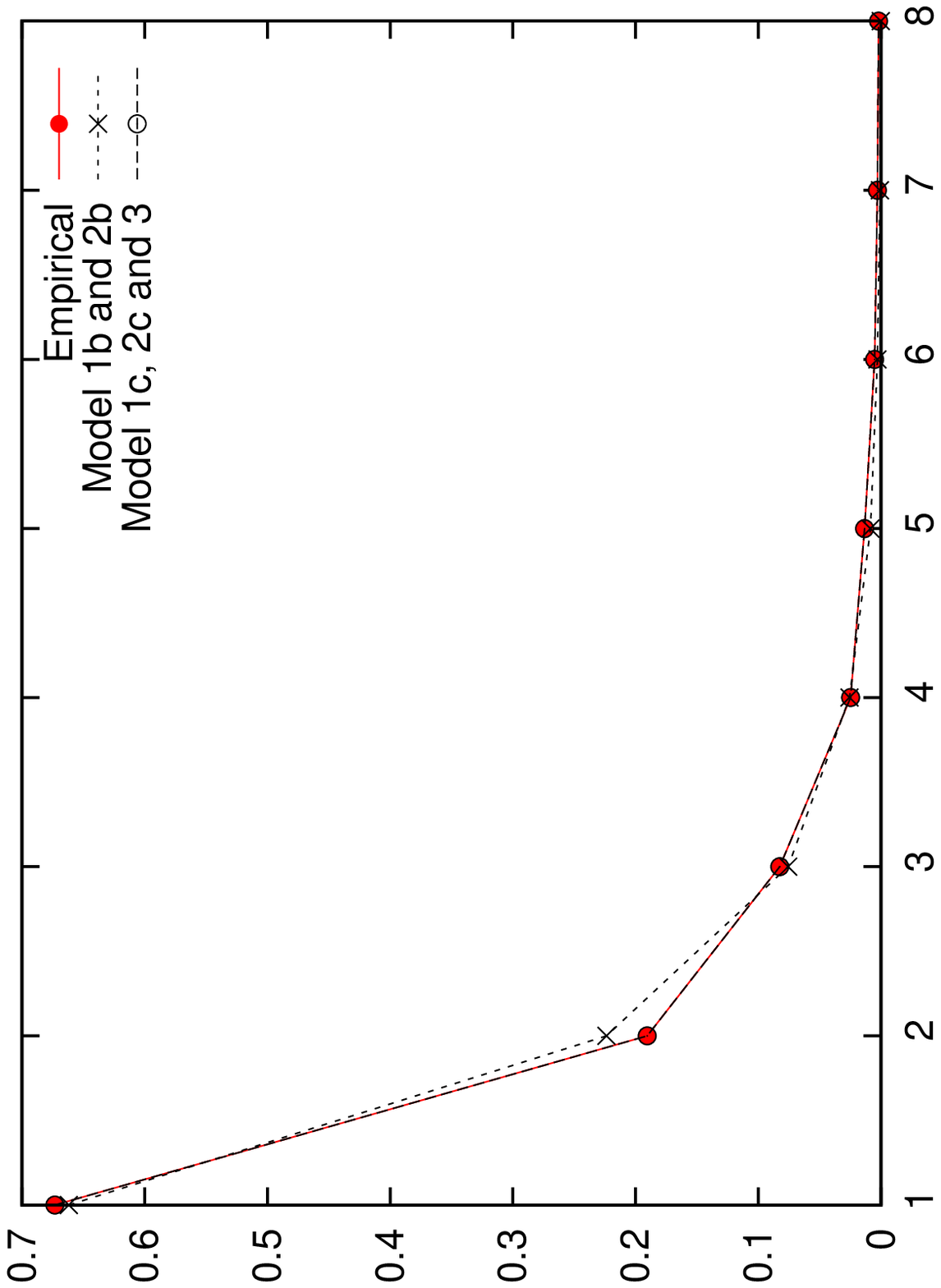}
&
\includegraphics[angle=270,width=0.47\textwidth]{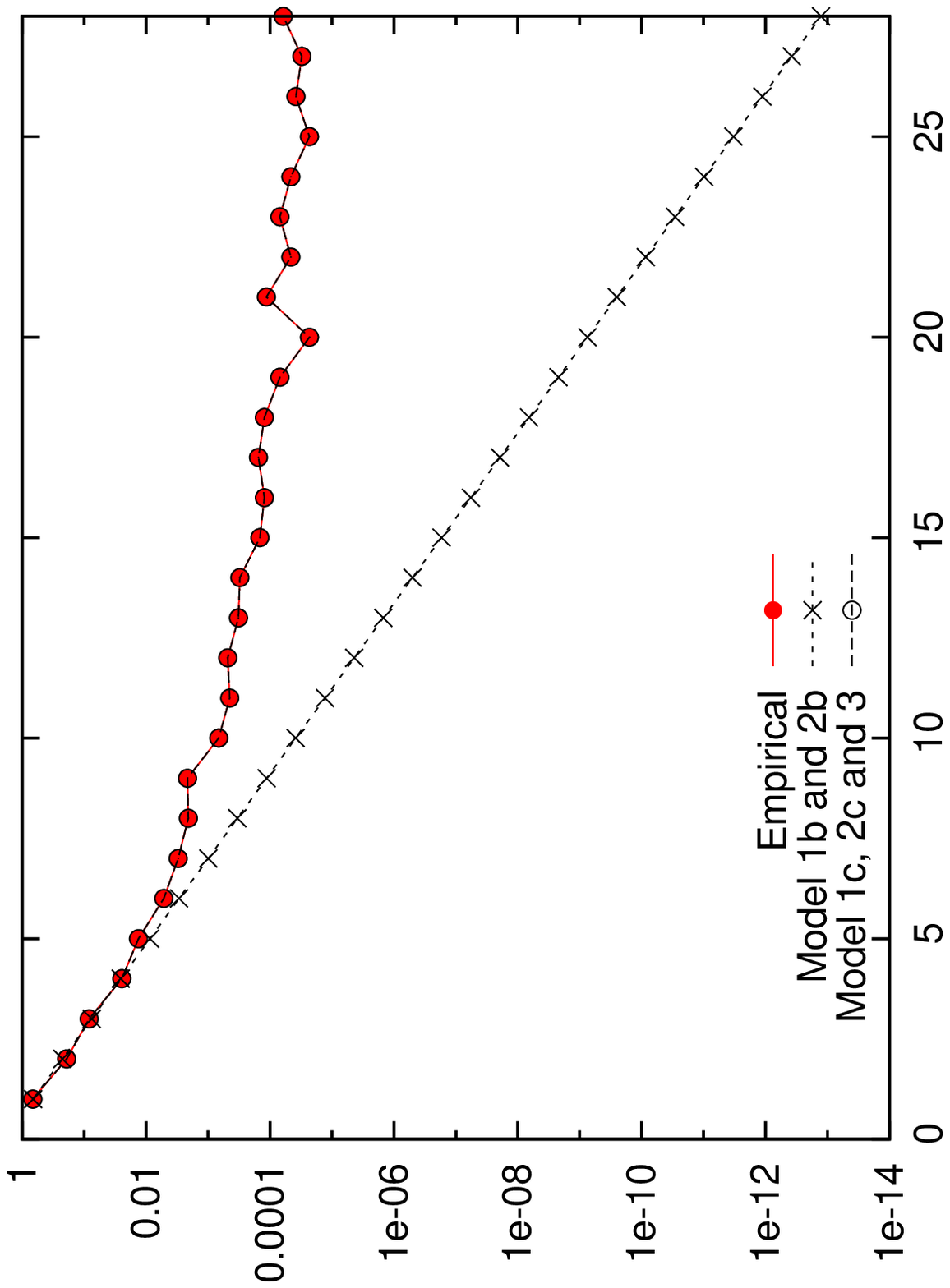}
\end{tabular}
\caption{Distribution $(g_{0,i})_{i\in\mathbb N}$ (empirical and geometric fit) of the size of limit orders submitted inside the spread, main body (left panel) and tail in semi-log scale (right panel). These distributions are computed for the stock AIRP.PA, but all stocks studied give similar results.}
\label{fig:AIRP.PA_20100101_20100228_ISSizeLimit}
\end{figure}
As expected, the geometric fit (used in model 1b and 2b) fits well the main body of the distribution, but its tail is to thin to reproduce the empirical distribution (which is used as is in Models 1c, 2c and 3).

On Figure \ref{fig:AIRP.PA_20100101_20100228_IBVolume} the empirical distribution for the stock AIRP.PA of the volume offered at the second best quote is plotted, along with its modelled counterparts.
\begin{figure}
\begin{tabular}{cc}
\includegraphics[angle=270,width=0.47\textwidth]{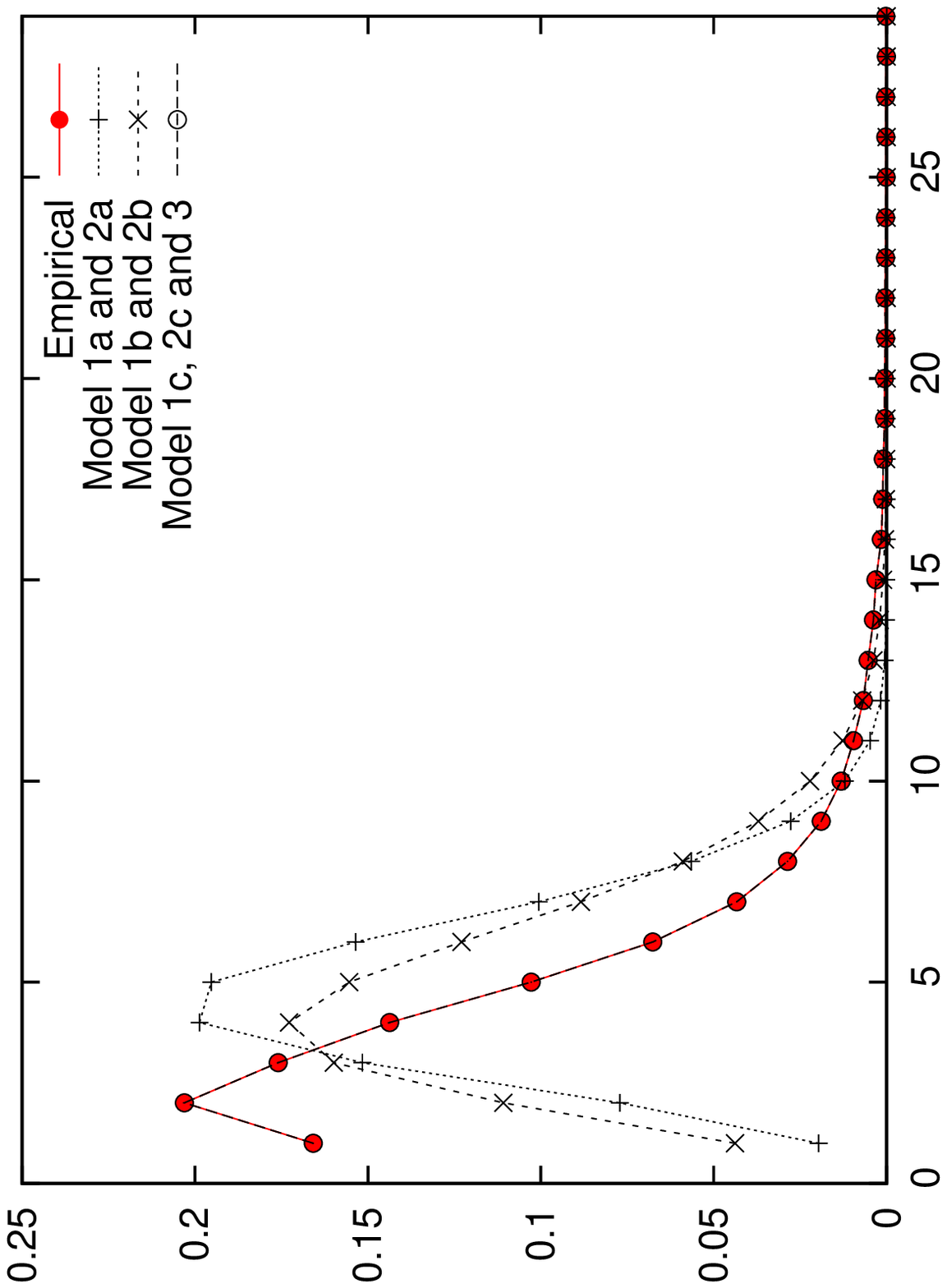}
&
\includegraphics[angle=270,width=0.47\textwidth]{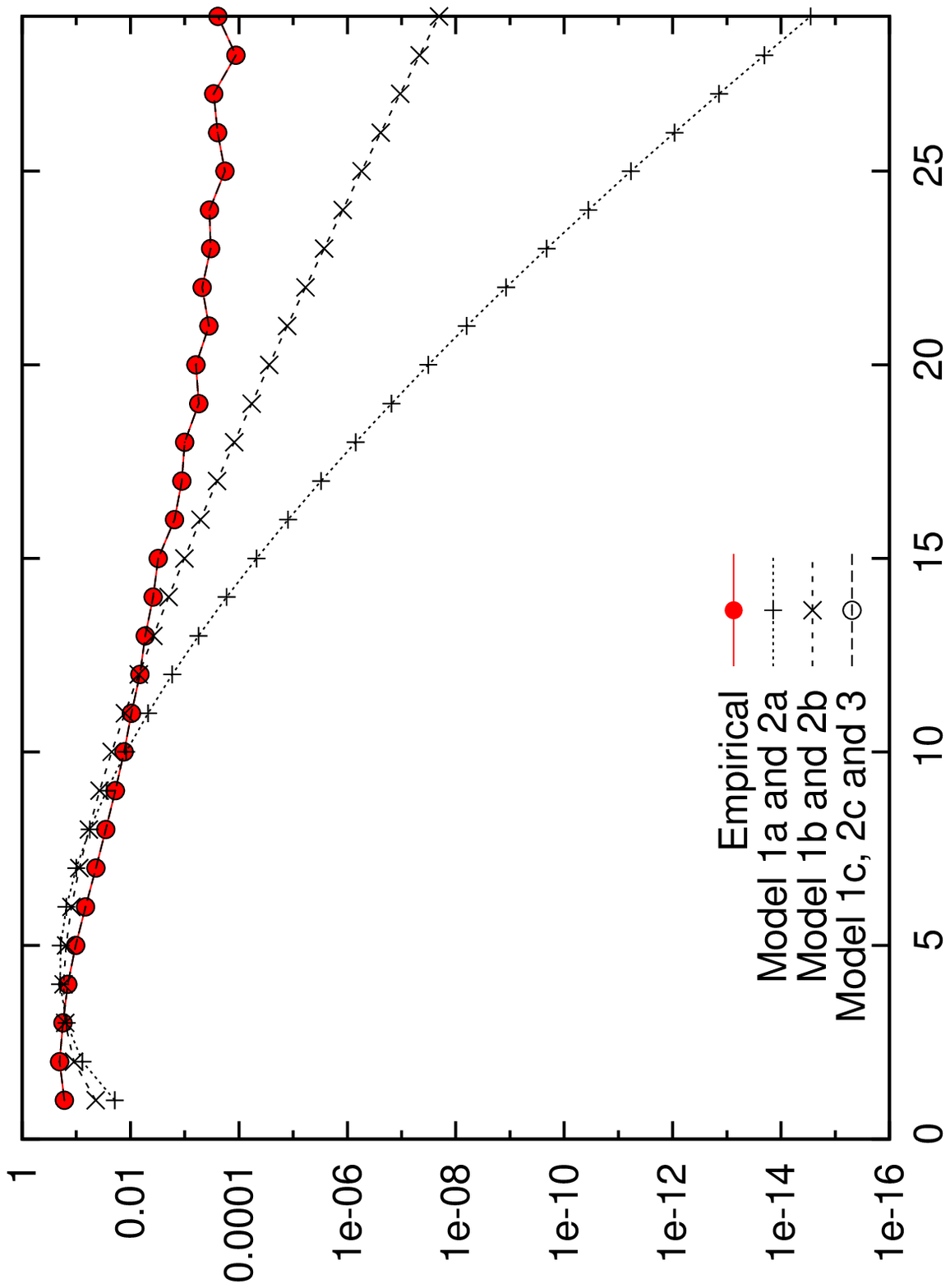}
\end{tabular}
\caption{Distribution $(\pi_{2,i})_{i\in\mathbb N}$ (empirical, Poisson obtained with models 1 and 2a and Negative binomial obtained with models 1b and 2b) of the volume offered at the second best quote, main body (left panel) and tail in semi-log scale (right panel). These distributions are computed for the stock AIRP.PA, but all stocks studied give similar results.}
\label{fig:AIRP.PA_20100101_20100228_IBVolume}
\end{figure}
As expected, the empirical distribution is very different from the one obtained using unit-size orders (Models 1a and 2a): the empirical main body is shifted to the left and its tail is fatter. Using geometrically-distributed sizes of orders (Model 1b and 2b) yields a better fit, with similar defects but of smaller amplitude. Models 1c, 2c and 3 will use the empirical distribution as input.

\subsection{Empirical results}

For all the stocks of our sample, we compute the analytical distribution of the volume offered at the best quote for all our models and variants, as well as the distribution obtained by simulation of the model 3.
On figure \ref{fig:BQVolume} we compare these analytical distributions with their empirical counterpart. For brevity only three stocks are shown, but the results for all stocks are similar.
\begin{figure}
\begin{tabular}{cc}
\includegraphics[angle=270,width=0.47\textwidth]{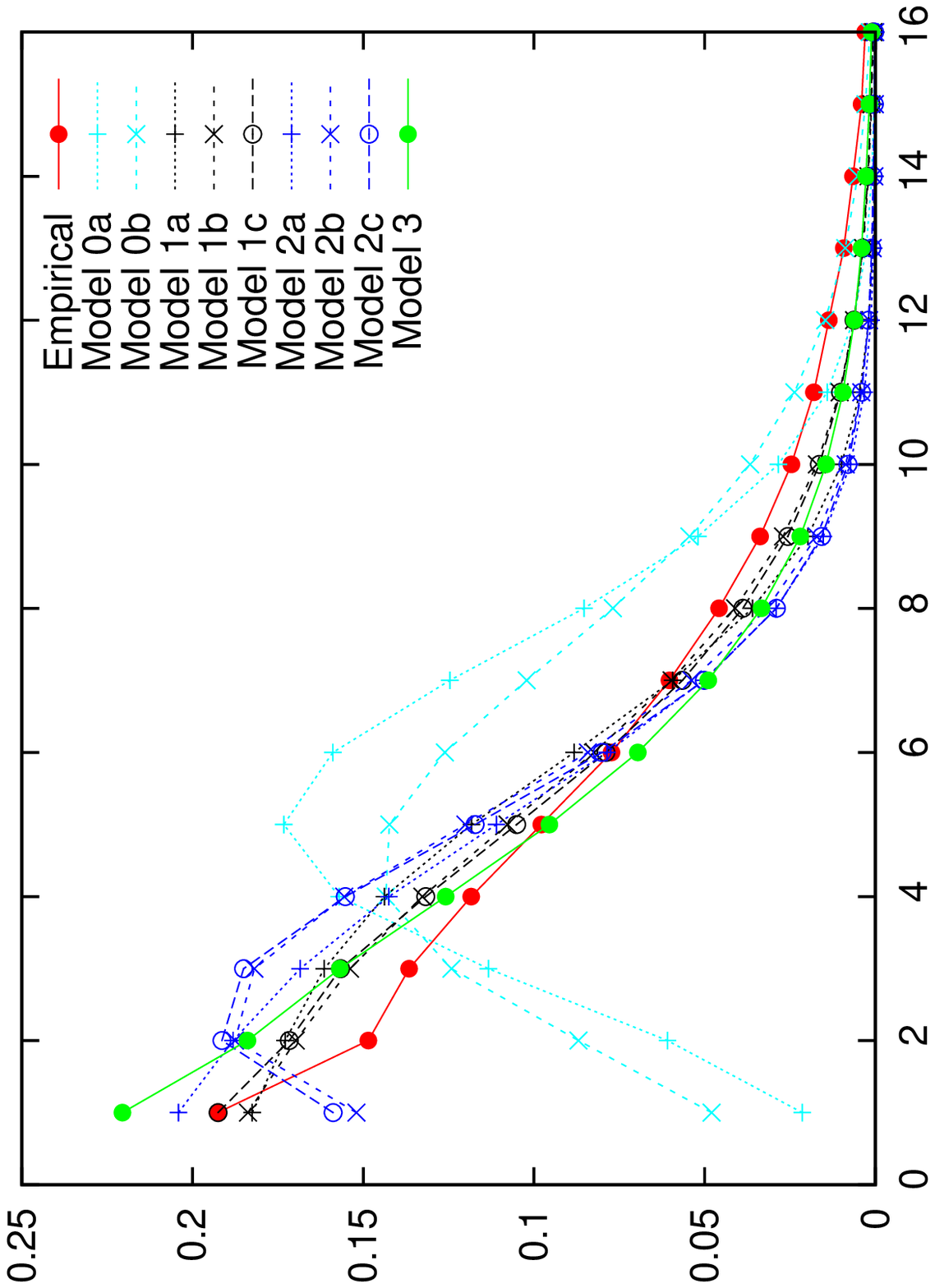}
&
\includegraphics[angle=270,width=0.47\textwidth]{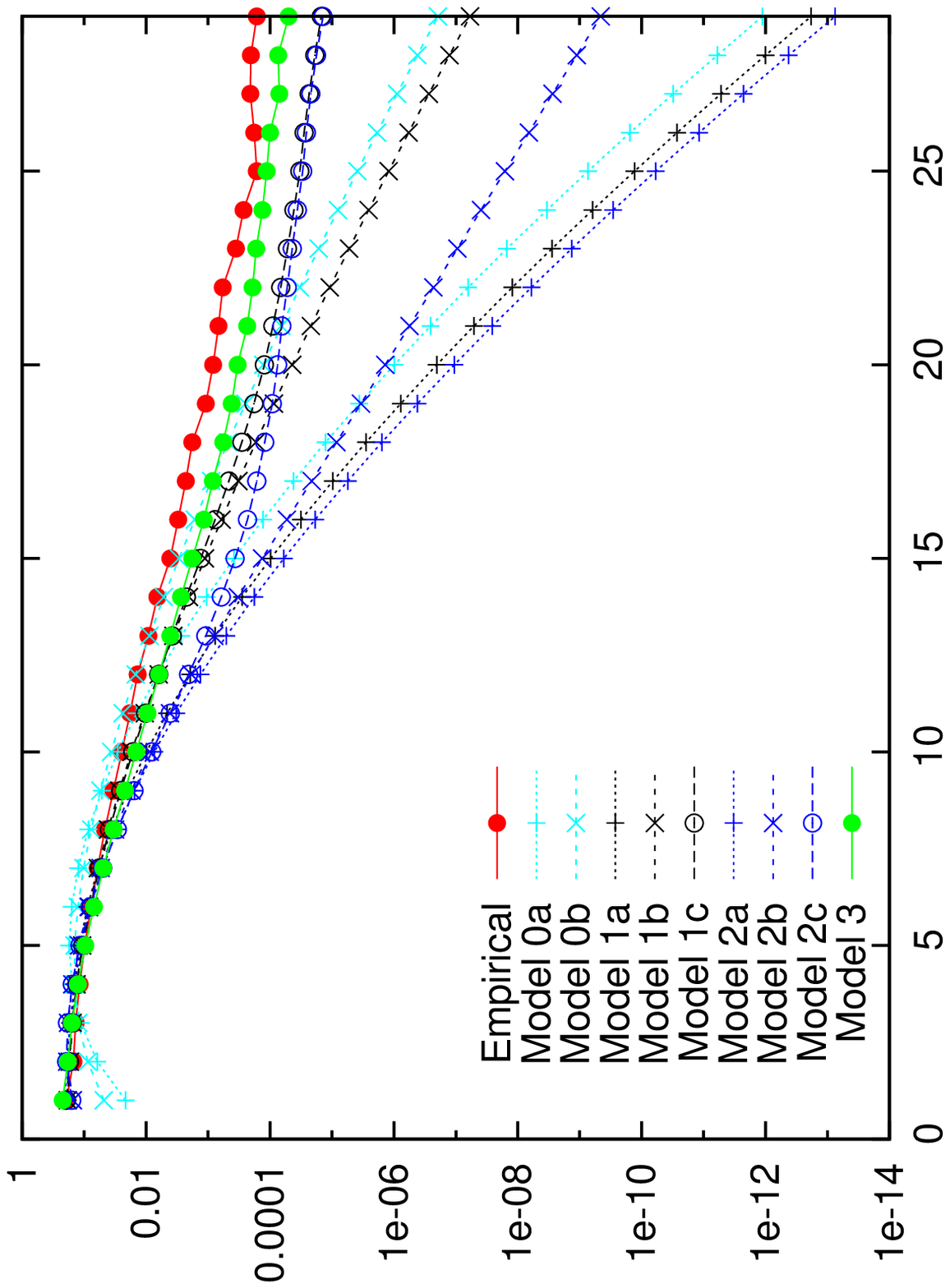}
\\
\includegraphics[angle=270,width=0.47\textwidth]{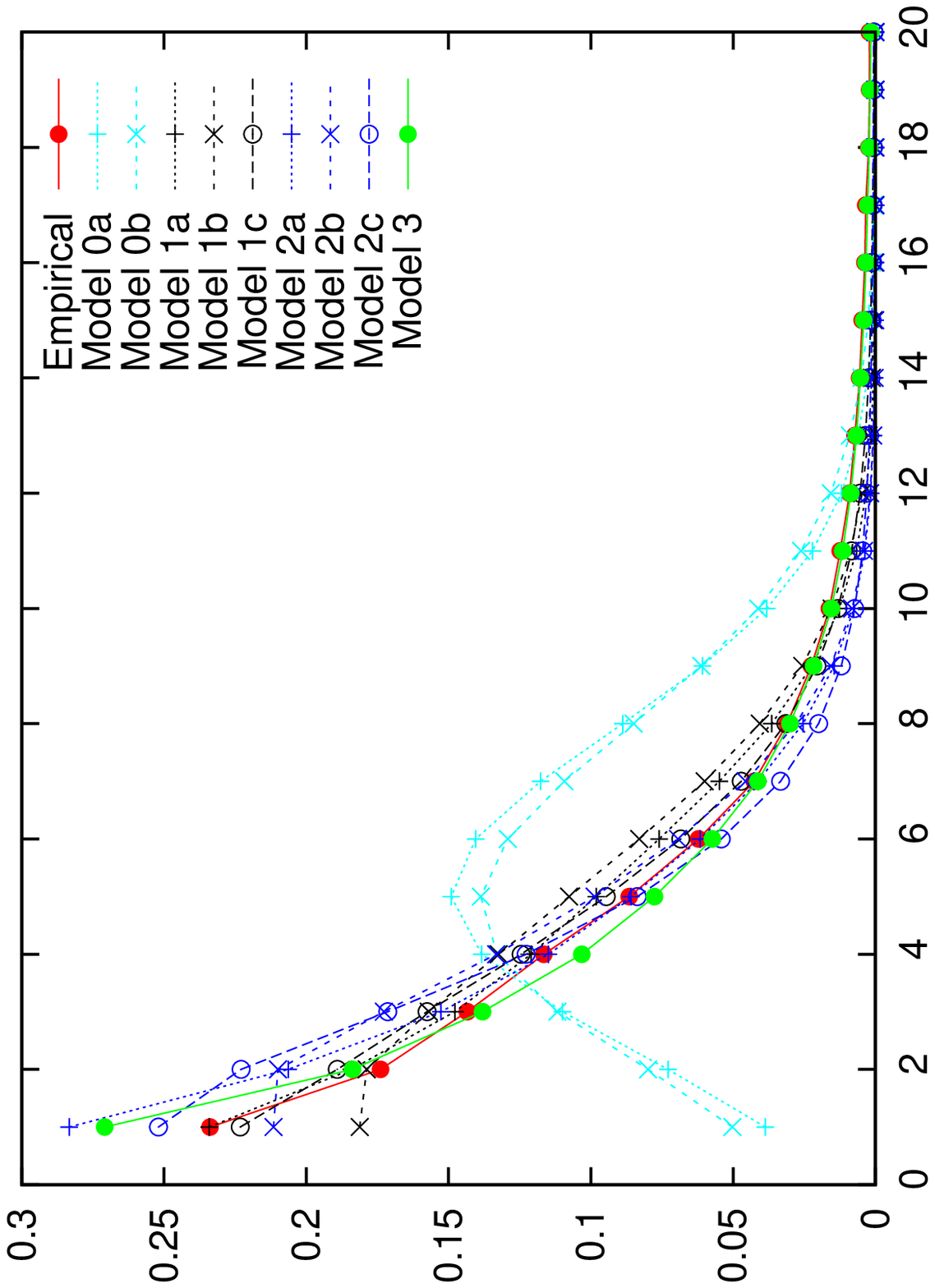}
&
\includegraphics[angle=270,width=0.47\textwidth]{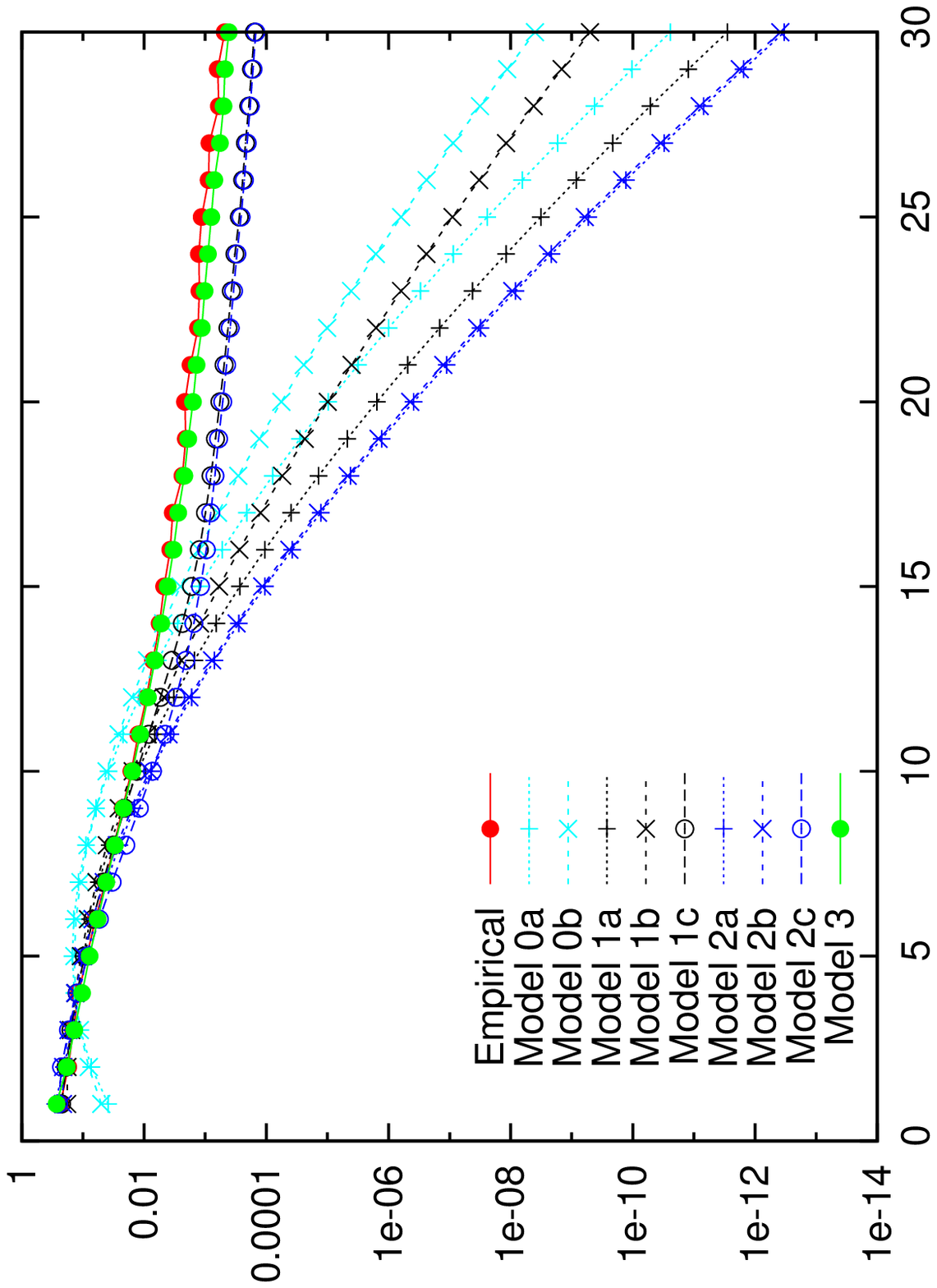}
\\
\includegraphics[angle=270,width=0.47\textwidth]{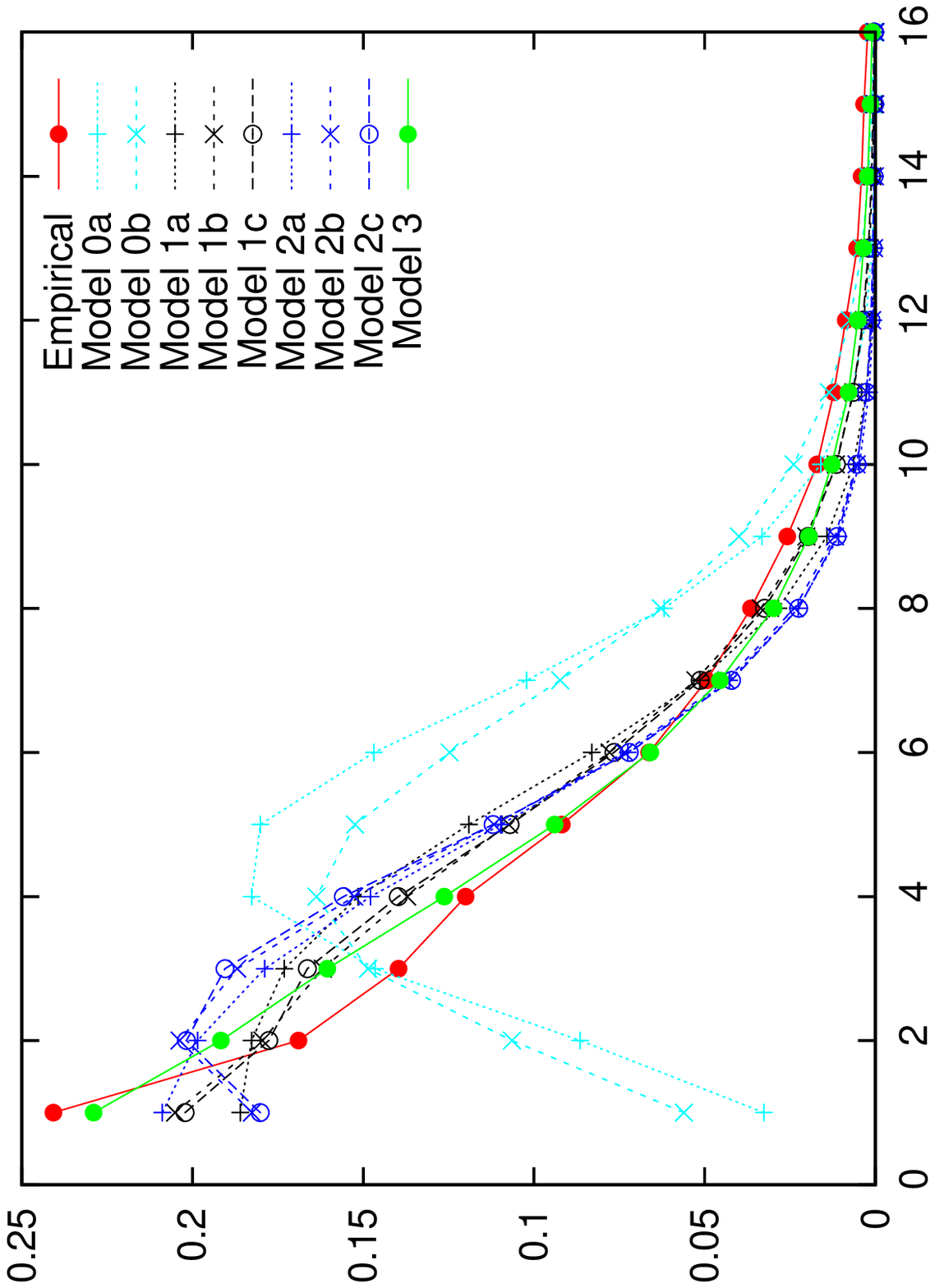}
&
\includegraphics[angle=270,width=0.47\textwidth]{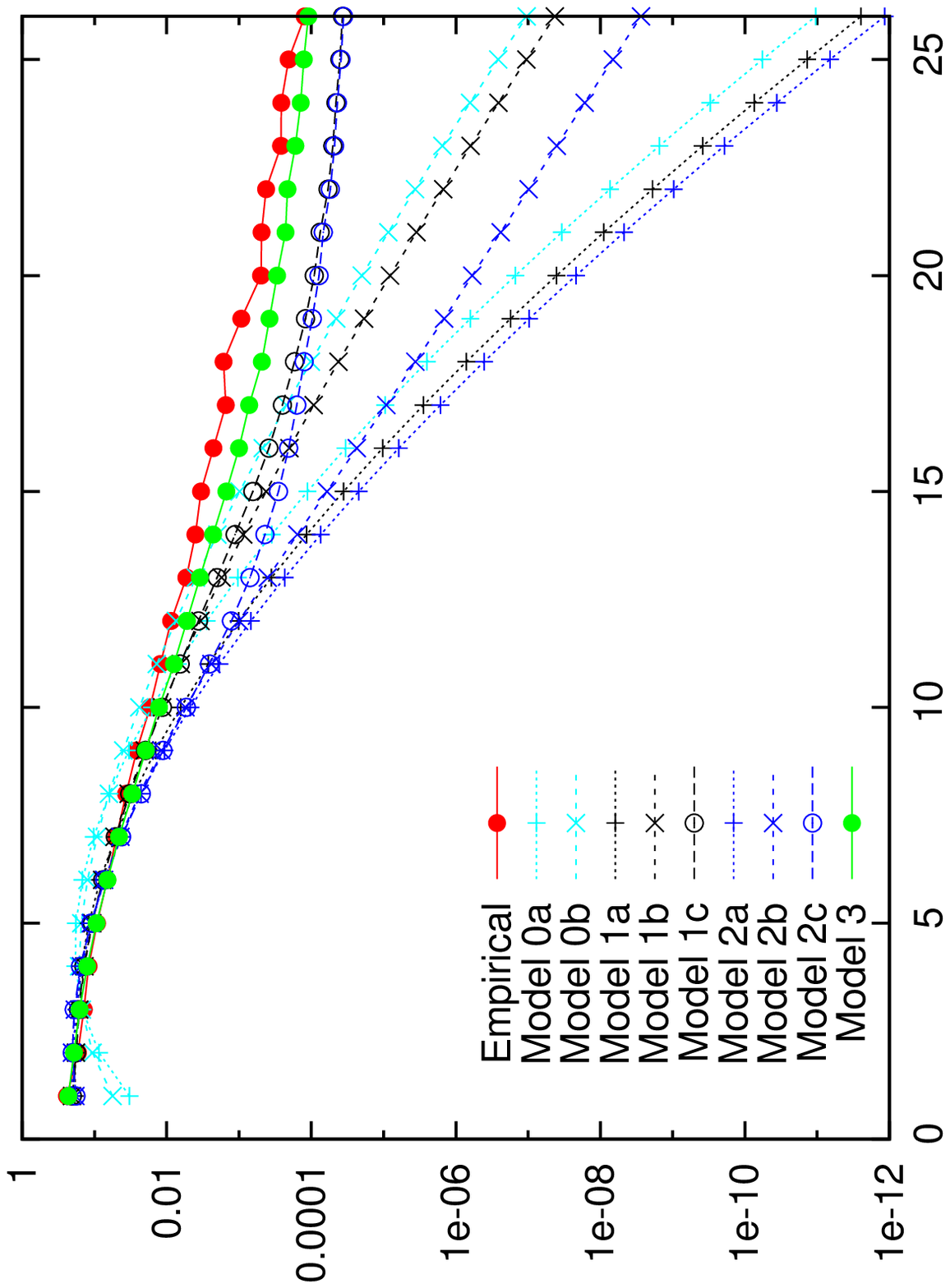}
\end{tabular}
\caption{Distribution $(\pi_{i})_{i\in\mathbb N}$ of the volume offered at the best quote for all the models described, compared to the empirical one, for three stocks: AIRP.PA (top), BNPP.PA (middle) and MICP.PA (bottom). The main body of the distribution is shown on the left column, the tail in semi-log scale is shown on the right column.}
\label{fig:BQVolume}
\end{figure}
The main result is that the killing and resurrecting mechanism used to compute the analytical distributions of models of types 1 and 2 is very efficient to produce an empirically-sound shape of the distribution of the volume offered at the best quote in an order book.
Indeed, all models of type 1 and 2 exhibit a main shape close to the empirical one, as well as to the one of model 3.
This means that the analytical formulas obtained above are a good alternative to compute a distribution that up to now was only obtained through simulation.
On the contrary, on all stocks, the basic models 0a and 0b fail to reproduce the empirical distribution.  This outlines the importance of the flows of aggressive limit orders and aggressive market orders when describing the volume at the best quote. This quantity cannot be modelled by a simple queue with limit order, partial cancellation and partial market order.

Furthermore, as expected, large differences between the models are observed when looking at the tail of the distribution. Variants a, with unit-sized orders, exhibit the thinner tails. The geometric distributions for the size of orders at the best quote and in the book (variants b) allows for slightly fatter tails, but still much thinner than the empirically observed one. Very interestingly, variants c exhibit a tail of the analytical distribution of the volume at the best quote that is as fat as the empirical one. This shows that using the empirical distributions for the distributions $(g_{0,i})_{i\in\mathbb N}$ (aggressive limit orders) and $(\pi_{2,i})_{i\in\mathbb N}$ (after an aggressive market order) is sufficient to obtain a fat tail at the best quote, even if the distribution of the size of the limit orders arriving at the best quote is thin-tailed. This again underlines the prime importance of the aggressive order flows modelled with the killing and resurrecting mechanism in our models.

We further compare the models by computing the distance (in $L^2$-norm) between the empirical distribution of the volume at the best quote and its analytical/numerical counterpart in all the models described, on all the stocks of the sample. 
Then we compute an average distance on the sample, and rank all the models from the best to the worst fit.
Table \ref{table:ModelL2Comparison} presents all the numerical results.
\begin{table}
\small
\begin{center}
\begin{tabular}{|c|cccccccl|}
\hline
Model & AIRP.PA & ALSO.PA & AXAF.PA & BNPP.PA & BOUY.PA & CARR.PA & DANO.PA & \ldots 
\\ \hline
 0a & 5.76E-002 & 6.23E-002 & 5.08E-002 & 7.10E-002 & 7.06E-002 & 6.41E-002 & 5.74E-002 & \ldots
\\ 0b & 3.32E-002 & 4.00E-002 & 4.27E-002 & 6.08E-002 & 4.45E-002 & 3.71E-002 & 3.30E-002 & \ldots
\\ 1a & 3.47E-003 & 3.12E-003 & 9.81E-004 & 8.57E-004 & 4.66E-003 & 4.65E-003 & 1.84E-003 & \ldots
\\ 1b & 1.47E-003 & 2.03E-003 & 3.28E-003 & 4.65E-003 & 2.31E-003 & 1.57E-003 & 5.98E-004 & \ldots
\\ 1c & 1.55E-003 & 2.77E-003 & 1.13E-003 & 8.33E-004 & 2.93E-003 & 9.46E-004 & 1.02E-003 & \ldots
\\ 2a & 5.02E-003 & 6.38E-003 & 5.58E-003 & 4.04E-003 & 6.42E-003 & 4.50E-003 & 4.65E-003 & \ldots
\\ 2b & 8.40E-003 & 6.37E-003 & 2.78E-003 & 3.54E-003 & 8.18E-003 & 8.56E-003 & 5.81E-003 & \ldots
\\ 2c & 8.52E-003 & 7.43E-003 & 4.88E-003 & 4.20E-003 & 8.57E-003 & 7.64E-003 & 5.41E-003 & \ldots
\\ 3d & 3.27E-003 & 1.06E-002 & 1.53E-003 & 1.79E-003 & 9.07E-003 & 4.66E-003 & 7.37E-003 & \ldots
\\ \hline
\end{tabular}
\begin{tabular}{|c|rcccc|c|c|}
\hline
Model & \ldots & MICP.PA & RENA.PA & SASY.PA & SGEF.PA & Average & Rank
\\ \hline
 0a & \ldots & 7.20E-002 & 5.56E-002 & 4.86E-002 & 6.73E-002 & 6.16E-002 & 9
\\ 0b & \ldots & 4.99E-002 & 3.42E-002 & 3.68E-002 & 3.60E-002 & 4.07E-002 & 8
\\ 1a & \ldots & 6.85E-003 & 2.28E-003 & 2.50E-003 & 4.78E-003 & 3.27E-003 & 3
\\ 1b & \ldots & 2.73E-003 & 9.57E-004 & 4.49E-003 & 1.41E-003 & 2.32E-003 & 2
\\ 1c & \ldots & 3.19E-003 & 1.42E-003 & 2.39E-003 & 1.14E-003 & 1.76E-003 & 1
\\ 2a & \ldots & 5.41E-003 & 4.80E-003 & 1.66E-003 & 5.55E-003 & 4.91E-003 & 5
\\ 2b & \ldots & 9.08E-003 & 6.06E-003 & 4.05E-003 & 1.04E-002 & 6.66E-003 & 6
\\ 2c & \ldots & 9.86E-003 & 7.13E-003 & 4.20E-003 & 9.12E-003 & 7.00E-003 & 7
\\ 3d & \ldots & 1.30E-003 & 5.46E-003 & 1.42E-004 & 6.96E-003 & 4.74E-003 & 4
\\ \hline
\end{tabular}
\caption{Distance in $L^2$-norm between the distribution of the volume at the best quote predicted by the various models and the empirical observation. Last column ranks the models from the best to the worst fit.}
\label{table:ModelL2Comparison}
\end{center}
\end{table}
Several observations can be made. 
Firstly, the quality of the fit of models of type 1 is in average better that models of type 2. Recall that models of type 1 have only aggressive market orders (no partial market orders) but allow for a flexible (geometric) size of limit orders at the best quote, while models of type require that all limit orders at the best quote are unit-sized. This observation underlines the benefit that one can get by allowing for a general size of orders in a limit order book model.
Secondly, we may also observe a bit surprisingly that models of type 1 even provide a better fit of the volume at the best quote than the model 3, which might appear counter intuitive since model 3 is the result of a simulation that take all zero-intelligence parameters available while others models are limited. 
This warns us that many mechanisms occur during the fitting of the data, that are more or less hidden but nevertheless important. In particular, the average distribution of the volume at the best quote is computed in our examples as a time-weighted quantity. The influence of the time is then integrated in the model through the way we fit the cancellation parameters as described in subsection \ref{subsec:DataMainParamertersEstimation}. Indeed, the cancellation parameters are not directly fitted but are a sort of adjustment variables. It turns out that models of type 1 do a better job at integrating this dimension.
As an illustration of this phenomenon, we redo the whole empirical part without taking into account the time while computing the average volume offered at the best quote. This means we work in "event time", which in other words is equivalent to considering that the timestamps of the data are deleted and randomly regenerated in the same order with exponential inter arrival times. In this setting, model 3 shows as expected the best performance: its average $L^2$-distance to the empirical distribution is equal to $2.16\times 10^{-3}$, while models of types 1 and 2 follow at a distance of $4.27\times 10^{-3}$ and more.

\section{Conclusion}
This paper has shown that basic zero-intelligence models of the limit order book are able to accurately describe the stationary distribution of volume offered at the best quote providing they include a proper mechanism to take into account aggressive orders that move the price: aggressive limit orders submitted inside the spread, as well as aggressive market orders that remove the whole liquidity available at the first level of the book.
Here, we have modelled these aggressive orders using results on killing and resurrecting a Markov process (here, the evolution of the quantity at the best quote when the price is constant), which allows us to provide analytical formulas for the distribution of interest.

We end this section by one last observation that may trigger future work on this subject. Figure \ref{fig:AIRP.PA_20100101_20100228_IBVolume} show that the theoretical models for $(\pi_{2,i})_{i\in\mathbb N}$ have the same defects in modelling the empirical distribution of the volume inside the book as Type-0 models with respect to the volume at the best quote: namely, a body shifted to the right and a tail too thin. This is not surprising as the volume inside the book as been treated here as one entity, without taking into account killing and resurrecting (as in models of Type-0).
Future work may include the killing and resurrecting as a cascading effect: when the price move, \emph{all} the limits in the book are shifted, i.e. are killed and resurrected.

\bibliographystyle{authordate1}
\bibliography{FirstLimitProcess} 

\end{document}